\newtheorem{theorem}{Theorem}[section]
\newtheorem{lemma}[theorem]{Lemma}
\newtheorem{claim}[theorem]{Claim}
\newtheorem{definition}[theorem]{Definition}
\newtheorem{remark}[theorem]{Remark}
\newtheorem{fact}[theorem]{Fact}
\newtheorem{thm}[theorem]{Theorem}
\newcommand{\by}{\times}
\newcommand{\set}[1]{\left\{ #1 \right\}}
\newcommand{\union}{\cup}
\newcommand{\intersect}{\cap}
\newcommand{\sm}{\setminus}
\renewcommand{\tilde}{\widetilde}
\DeclareMathOperator{\poly}{poly}
\def\pr{\qopname\relax n{Pr}}
\def\ex{\qopname\relax n{E}}
\def\min{\qopname\relax n{min}}
\def\max{\qopname\relax n{max}}
\def\Ex{\qopname\relax n{\mathbf{E}}}
\newcommand{\expect}[2][]{\ex_{#1} [#2]}
\newcommand{\RR}{\mathbb{R}}
\newcommand{\RRp}{\RR^+}
\newcommand{\NN}{\mathbb{N}}
\def\A{\mathcal{A}}
\def\D{\mathcal{D}}
\def\H{\mathcal{H}}
\def\I{\mathcal{I}}
\def\L{\mathcal{L}}
\def\P{\mathcal{P}}
\def\R{\mathcal{R}}
\def\V{\mathcal{V}}
\def\X{\mathcal{X}}
\def\sse{\subseteq}
\newcommand{\grad}{\bigtriangledown}
\newcommand{\eat}[1]{}
\newcommand{\INPUT}{\item[\textbf{Input:}]}
\newcommand{\OUTPUT}{\item[\textbf{Output:}]}
\newcommand{\PARAMETER}{\item[\textbf{Parameter:}]}
\newcommand{\maxi}[1]{\mbox{maximize} & {#1 } & \\}
\newcommand{\st}{\mbox{subject to} }
\newcommand{\con}[1]{&#1 & \\}
\newcommand{\qcon}[2]{&#1, & \mbox{for } #2.  \\}
\newenvironment{lp}{\begin{equation}  \begin{array}{lll}}{\end{array}\end{equation}}
\newenvironment{lp*}{\begin{equation*}  \begin{array}{lll}}{\end{array}\end{equation*}}
\title{From Convex Optimization to Randomized Mechanisms:\\ Toward Optimal Combinatorial Auctions\footnote{Extended abstract appears in \emph{Proceedings of the 43rd ACM Symposium on Theory of Computing (STOC), 2011.} }}
 \author{
Shaddin Dughmi\thanks{Department of Computer Science, Stanford
 University, 460 Gates Building, 353 Serra Mall, Stanford, CA 94305.  Supported by NSF Grant CCF-0448664 and a Siebel Foundation Scholarship. Email: {\tt shaddin@cs.stanford.edu}.} 
\and 
Tim Roughgarden\thanks{Department of Computer Science, Stanford
 University, 462 Gates Building, 353 Serra Mall, Stanford, CA 94305.
 Supported in part by NSF CAREER Award
 CCF-0448664, an ONR Young Investigator Award, an ONR PECASE Award,
 an AFOSR MURI grant, and an Alfred P. Sloan Fellowship. Email: {\tt
 tim@cs.stanford.edu}.} 
\and
Qiqi Yan\thanks{Department of Computer Science, Stanford
 University, 460 Gates Building, 353 Serra Mall, Stanford, CA 94305. Supported by a Stanford Graduate Fellowship. Email: {\tt
 qiqiyan@cs.stanford.edu}.}
} 
\begin{document}

% %proc
% \conferenceinfo{STOC'11,} {June 6--8, 2011, San Jose, California, USA.} 
% \CopyrightYear{2011} 
% \crdata{978-1-4503-0691-1/11/06} 
% \clubpenalty=10000 
% \widowpenalty = 10000
% %endproc

\maketitle

\thispagestyle{empty} %full
\addtocounter{page}{-1} %full

\begin{abstract}
We design an expected polynomial-time, truthful-in-expectation,
$(1-1/e)$-approximation mechanism for welfare maximization in a
fundamental class of combinatorial auctions.  Our results apply to
bidders with valuations that are {\em matroid rank sums (MRS)}, which 
encompass most concrete examples of submodular functions studied in 
this context, including coverage functions, matroid weighted-rank
functions, and convex combinations thereof.  Our approximation factor is the best possible, even for
known and explicitly given coverage valuations, assuming $P \neq NP$.
Ours is the first truthful-in-expectation and polynomial-time
mechanism to achieve a constant-factor approximation for an $NP$-hard
welfare maximization problem in combinatorial auctions with 
heterogeneous goods and restricted valuations.

Our mechanism is an instantiation of a new framework for designing
approximation mechanisms based on randomized
rounding algorithms.  A typical such algorithm first optimizes over
a fractional relaxation of the original problem, and then randomly
rounds the fractional solution to an integral one.  With rare
exceptions, such algorithms cannot be converted into
truthful mechanisms.  
The high-level idea of our mechanism design framework is to optimize
{\em directly over the (random) output of the rounding algorithm},
rather than over the {\em input} to the rounding algorithm.
This approach leads to truthful-in-expectation mechanisms, and these
mechanisms can be implemented efficiently when the corresponding
objective function is concave.  For bidders with MRS valuations, we
give a novel randomized rounding algorithm that leads to both
a concave objective function and a $(1-1/e)$-approximation of the
optimal welfare.

%Todo: I ommitted this paragraph, as its not particularly exciting. Is this ok?
%  Our approximation mechanism also provides an interesting separation
% between the power of maximal-in-distributional-range mechanisms and
% that of universally truthful (or deterministic) VCG-based mechanisms,
% which cannot achieve a constant-factor approximation for welfare
% maximization with MRS valuations.
\end{abstract}

\newpage %full

% %proc
% \category{F.0}{Theory of Computation}{General}
% %\category{F.2}{Theory of Computation}{Analysis of Algorithms and Problem Complexity}
% %\category{J.4}{Social and Behavioral Sciences}{Economics}

% \terms{Algorithms, Economics, Theory}

% \keywords{Algorithmic Mechanism Design, Combinatorial Auctions}
% %endproc

\section{Introduction}

The overarching goal of {\em algorithmic mechanism design} is to design
computationally efficient algorithms that solve or approximate
fundamental optimization problems in which the underlying data is a
priori unknown to the algorithm.  A central example in both theory and
practice is welfare-maximization in combinatorial auctions.  Here,
there are $m$ items for sale and $n$ bidders vying for them.  Each
bidder $i$ has a private {\em valuation} $v_i(S)$ for each subset~$S$
of the items.\footnote{Each bidder has an exponential number of
  private values; we ignore the attendant representation issues for
the moment.}
The {\em welfare} of an allocation $S_1,\ldots,S_n$ of
the items to the bidders is $\sum_{i=1}^n v_i(S_i)$.  
Since valuations are initially unknown to the seller, computing a
near-optimal allocation requires eliciting information from the
(self-interested) bidders, for example via a bid.  
A {\em mechanism} is a protocol that extracts such information and
computes an allocation of the items and payments.

The ``holy grail'' for a mechanism designer is to devise a
computationally efficient and incentive-compatible mechanism with
an approximation factor that matches the best one known for the (easier)
problem in which the underlying data is provided up front.\footnote{In
  this paper, by ``incentive compatible'' we generally mean 
a (possibly randomized) mechanism such that every
participant maximizes its expected payoff by truthfully
revealing its information to the mechanism, no matter
how the other participants behave. Such mechanisms are
called truthful-in-expectation, and are defined formally
in Section~\ref{sec:MD}.}
Such results are usually difficult to obtain, and in some cases are provably impossible using deterministic mechanisms~\cite{LMN03,PSS08}.
The space of randomized mechanisms, however, is much more promising as shown recently in \cite{DD09,DR10}.\footnote{We note that the impressively general positive
  results for   implementations in Bayes-Nash
  equilibria that were recently obtained in \cite{HL10,soda11a,soda11b} 
  do not apply to the stronger incentive-compatibility notions used in
  this paper and in most of the algorithmic mechanism design literature.} This paper provides such a positive result for a fundamental class of
combinatorial auctions, via a novel randomized mechanism design
framework based on convex optimization.

Algorithmic mechanism design is difficult because 
incentive compatibility severely limits how the algorithm
can compute an outcome, which prohibits use of most of the ingenious
approximation algorithms that have been developed for different
optimization problems.  
More concretely, the only general approach known for designing
(randomized) truthful mechanisms is via {\em maximal-in-distributional
  range (MIDR) algorithms}~\cite{DD09,DR10}. 
An MIDR algorithm fixes a set of distributions over feasible solutions
--- the {\em distributional range} --- independently of the 
valuations reported by the self-interested participants, and outputs a
random sample from the distribution that maximizes expected (reported)
welfare.
The \emph{Vickrey-Clarke-Groves (VCG)} payment scheme renders an MIDR
algorithm truthful-in-expectation.

Most approximation algorithms are not MIDR algorithms.
Consider, as an example, a {\em randomized rounding} algorithm for
welfare maximization in combinatorial auctions (e.g. \cite{F06,DS06}).
We can view such an algorithm as the composition of two algorithms,
%(Figure~\ref{fig:xxx}(a)), 
a {\em relaxation algorithm} and a {\em rounding algorithm}.
The relaxation algorithm is deterministic and takes as input the
problem data (players' valuations~$v$), and outputs the (fractional)
solution to a linear programming relaxation of the
welfare-maximization problem that is optimal for the objective
function defined by~$v$.
The rounding algorithm is randomized and takes as input this fractional
solution and outputs a feasible allocation of the items to the players.
Taken together, these algorithms assign to each input~$v$ a
probability distribution~$D(v)$ over integral allocations.
For almost all known randomized rounding algorithms, there is an
input~$v$ such that the expected objective function value 
$\expect[y \sim D(v)]{v^Ty}$
with the distribution~$D(v)$ is inferior to that 
$\expect[y \sim D(w)]{v^Ty}$
with a distribution~$D(w)$ that the algorithm would produce for a
different input~$w$ --- and this is a violation of the MIDR property.
Informally, such violations are inevitable unless a rounding algorithm
is designed explicitly to avoid them, on top of the usual
approximation requirements.

The exception that proves the rule is the important and well-known
mechanism design framework of Lavi and Swamy~\cite{LS05}.
Lavi and Swamy~\cite{LS05} begin with the foothold that the 
{\em fractional} welfare maximization problem --- the relaxation algorithm
above --- can be made truthful by charging appropriate VCG payments.  
Further, they identify a very special type of rounding algorithm
that preserves truthfulness: if the expected allocation
produced by the rounding algorithm is {\em always identical to the input to the rounding algorithm},
component-wise, up to some universal scaling factor $\alpha$, then
composing the two algorithms easily yields an $\alpha$-approximate
truthful-in-expectation mechanism (after scaling the fractional VCG
payments by~$\alpha$).
Perhaps surprisingly, there are some interesting problems, such as
welfare maximization in combinatorial auctions with general
valuations, that admit such a rounding algorithm with a best-possible
approximation guarantee (assuming $P \neq NP$).
%For example, for welfare maximization in combinatorial auctions with
%unrestricted valuations, the techniques in~\cite{xxx} yield an
%incentive-compatible and polynomial-time mechanism that is an
%$O(\sqrt{m})$-approximation, which is the best-possible approximation
%ratio assuming that $P \neq NP$.
%
However, most $NP$-hard welfare maximization problems do not seem to
admit good randomized rounding algorithms of the rigid type required
by this design framework.
%Lavi and Swami~\cite{xxx}.

\subsection{Our Contributions}

We introduce a new approach to designing truthful-in-expectation
approximation mechanisms based on randomized rounding algorithms; we
outline it here for the special case of welfare maximization in
combinatorial auctions.
The high-level idea is to optimize {\em directly on the outcome of the
  rounding algorithm}, rather than merely on the outcome of the
relaxation algorithm (the {\em input} to the rounding algorithm). 
%See Figure~\ref{fig:xxx}(b).
In other words, let $r(x)$ denote a randomized rounding algorithm, from
fractional allocations to integer allocations.
Given players' valuations~$v$, we compute a fractional
allocation~$x$ that maximizes the expected 
welfare $\expect[y \sim
r(x)]{v^Ty}$ over all fractional allocations $x$.
This methodology evidently gives MIDR algorithms.
This optimization problem is often intractable, but
when the rounding algorithm~$r$ and the space of valuations~$v$ are
such that the function $\expect[y \sim r(x)]{v^Ty}$ is always concave in $x$
--- in which case we call~$r$ a {\em convex rounding algorithm} ---
it can be solved in polynomial time using convex
programming (modulo numerical issues that we address later).

We use this design framework to give an expected polynomial-time, truthful-in-expectation,  $(1-1/e)$-approximation mechanism for welfare
maximization in combinatorial auctions in which bidders' valuations
are {\em matroid rank sums (MRS)} --- non-negative linear combinations 
of matroid rank functions on the items.  MRS valuations are submodular
and encompass most concrete examples of submodular functions that have
been studied in the combinatorial auctions literature, including all
coverage functions and matroid weighted-rank functions (see
Section~\ref{sec:MRS} for formal definitions).
Our approximation guarantee is optimal, assuming $P \neq NP$, even
for the special case of the welfare maximization problem with known
and explicitly presented coverage valuations.
Our mechanism is the first truthful-in-expectation and polynomial-time
mechanism to achieve a constant-factor approximation for any $NP$-hard
special case of combinatorial auctions that doesn't assume that there
are multiple copies of every type of item.
It works with ``black-box'' valuations, provided that they support a
randomized analog of a ``value oracle''. We also give a (non-oracle-based) version of the mechanism for explicitly represented coverage valuations.

%In Appendix~\ref{sec:separation}, we apply our mechanism to combinatorial auctions with explicitly represented coverage valuations. This demonstrates the utility of our mechanism in a concrete, non-oracle-based setting. Moreover, it allows us to establish an interesting separation between the power of MIDR mechanisms and that of universally truthful (or deterministic) VCG-based mechanisms, which provably cannot achieve a constant-factor approximation for this problem.

\subsection{Related Work}

We discuss only the results most pertinent to this work;
see~\cite{CSS05} for an introduction to combinatorial auctions,
and~\cite{BN07} for a survey of truthful approximation mechanisms for
combinatorial auctions.

For the welfare maximization problem in combinatorial auctions with
general valuations (assuming only that $v_i(\emptyset) = 0$ and that
$v_i(S) \le v_i(T)$ whenever $S \subseteq T$), the best
approximation factor possible by a polynomial-time approximation algorithm is
%$O(\min\{\sqrt{m}, n \})$, 
roughly $\min \{ \sqrt{m},n \}$,
where $n$ is the number of bidders and  $m$ is the number of items.
%This upper bound is essentially optimal in multiple senses:
There are comparable 
unconditional lower bounds in various oracle models, assuming polynomial
communication and unbounded computation~\cite{MSV08}; and, assuming that
$P \neq NP$, for various classes of succinctly represented
valuations~\cite{LOS}.
%The upper bound can also be achieved
%via a (randomized) incentive-compatible mechanism, assuming
%that the valuations support ``demand oracles''~\cite{LS05}.  

These strong negative results for welfare maximization with general
valuations motivate the study of important special cases.  
Numerous special cases have been considered (see~\cite[Fig 1.2]{BN07}),
and the most well-studied one is for bidders with valuations that are
{\em submodular}, meaning that $v_i(S \cap T) + v_i(S \cup T) \le
v_i(S) + v_i(T)$ for every bidder~$i$ and bundles $S,T$ of items. 
Submodular functions play a fundamental role in combinatorial
optimization, and have a natural economic interpretation in terms of
diminishing marginal returns.

Without incentive-compatibility constraints, the welfare maximization
problem with submodular bidder valuations is completely solved.
Vondr{\'a}k~\cite{V08} gave a $(1-\tfrac{1}{e})$-approximation algorithm
for the problem, improving over the $\tfrac{1}{2}$-approximation given
in~\cite{LLN01}.  The algorithm in~\cite{V08} works in the {\em
  value oracle} model, where each valuation~$v$ is modeled as a
``black box'' that returns the value $v(S)$ of a queried set~$S$ in a
single operation.  The approximation factor of~$1-\tfrac{1}{e}$ is
unconditionally optimal in the value-oracle model (for polynomial
communication)~\cite{MSV08}, and is also optimal (for polynomial time) for certain
succinctly represented submodular valuations, assuming $P \neq NP$~\cite{KhotLMM08}. The result of \cite{KhotLMM08}  implies that $1-1/e$ is the optimal approximation factor in our model as well, assuming $P \neq NP$.\footnote{We show in Appendix~\ref{app:separation} that our oracle model is no more powerful than polynomial-time computation in the special case of explicitly represented coverage functions, for which $1-1/e$ is optimal assuming $P \neq NP$~\cite{KhotLMM08}. In contrast, the work of \cite{FV06} improves on the approximation factor of $1-1/e$ by using \emph{demand oracles}, which can not be simulated in polynomial time for explicit coverage functions.}

Despite intense study, prior to this work, 
there were no truthful-in-expectation and polynomial-time constant-factor
approximation mechanisms for welfare maximization with any non-trivial
subclass of submodular bidder valuations.
The best previous results, which apply to all submodular valuations,
are a truthful-in-expectation  $O\left(\frac{\log m}{\log \log m}\right)$ approximation mechanism in the communication complexity model due to Dobzinski, Fu and Kleinberg~\cite{DFK11}, and a universally-truthful\footnote{A mechanism is universally-truthful if, for \emph{every} realization of a the mechanism's coins, each player maximizes his payoff by bidding truthfully. Universally truthful mechanisms are defined formally in Section \ref{sec:MD}} $O(\log m \log \log m)$ approximation mechanism in the \emph{demand oracle} model due to Dobzinski~\cite{D07}.

%which is also universally truthful 
%(i.e., is a distribution over deterministic mechanisms).

The aforementioned works~\cite{DFK11,LS05} are precursors to our general design framework that optimizes directly over the output of a randomized rounding algorithm. In the framework of Lavi and Swamy~\cite{LS05}, the input to and output of the rounding algorithm are assumed to coincide up to a scaling
factor, so optimizing over its input (as they do) is equivalent to optimizing over its output (as we do).  In the result of Dobzinski et al.~\cite{DFK11}, optimizing with respect to their ``proxy bidders'' is equivalent to optimizing over the output of a particular randomized rounding algorithm.

\section{Preliminaries}
\def\S{\mathcal{S}}
\subsection{Optimization Problems}
\label{sec:problem}

We consider  optimization problems $\Pi$ of the following general
form. Each instance of $\Pi$ consists of a \emph{feasible set}  $\S$, and an
\emph{objective function} $w:\S \to \RR$.  
%We make no assumptions
%about $w$ -- it may be nonlinear, nonsubmodular, etc. 
%We are motivated
%by problems for which $w(x)$ is the social welfare of outcome $x$. 
%We do not assume anything about the representation of $\S$ and $w$ -- the
%representation may be implicit. 
The solution to an instance of $\Pi$
is given by the following optimization problem. 
\begin{lp}\label{lp:problem}
  \maxi{w(x)}
\st
\con{x \in \S.}
\end{lp}

\subsection{Mechanism Design Basics}\label{sec:MD}
We consider mechanism design optimization problems of the form in \eqref{lp:problem}. In such problems, there are $n$ players, where each player $i$ has a \emph{valuation function} $v_i: \S \to \RR$. We are concerned with \emph{welfare maximization} problems, where the objective is $w(x) = \sum_{i=1}^n v_i(x)$.

We consider direct-revelation mechanisms for 
optimization mechanism design problems.  Such a mechanism comprises an {\em
  allocation rule}, which is a function from (hopefully
truthfully) reported valuation functions $v_1,\ldots,v_n$ to an outcome
$x \in \S$, and a {\em payment rule}, which is a function from
reported valuation functions to a required payment from each player.
We allow the allocation and payment rules to be randomized.

A mechanism with allocation and payment rules $\A$ and $p$ is {\em
  truthful-in-expectation} if every player always maximizes its expected
  payoff by truthfully reporting its valuation function, meaning that
\begin{equation}\label{eq:truthful}
\Ex[v_i(\A(v)) - p_i(v)] \geq \Ex[ v_i(\A(v'_i,v_{-i})) -
  p_i(v'_i,v_{-i})]
\end{equation}
for every player~$i$, (true) valuation function~$v_i$, (reported)
valuation function~$v'_i$, and (reported) valuation functions~$v_{-i}$ of
the other players.
The expectation in~\eqref{eq:truthful} is over the coin flips of the
mechanism.  
If \eqref{eq:truthful} holds for every flip of the coins, rather than merely in expectation, we call the mechanism \emph{universally truthful}.
%If the mechanism is deterministic and satisfies this condition, then
%it is simply called {\em truthful}.

The mechanisms that we design can be thought of as randomized
variations on the classical VCG mechanism, as we explain next.  
Recall that the {\em VCG
  mechanism} is defined by the (generally intractable)
allocation rule that selects the welfare-maximizing outcome with
respect to the reported valuation functions, and the payment rule that
charges each player~$i$ a bid-independent ``pivot term'' minus the
reported welfare earned by other players in the selected outcome.  This
(deterministic) mechanism is truthful; see e.g.~\cite{Nis07}.

Now let $dist(\S)$ denote the probability distributions over a
feasible set $\S$, and let $\D \sse dist(\S)$ be a compact subset of
them.  The corresponding {\em Maximal in Distributional Range (MIDR)}
allocation rule is defined as follows: given reported valuation
functions $v_1,\ldots,v_n$, return an outcome that is sampled randomly
from a distribution $D^* \in \D$ that maximizes the expected welfare
$\Ex_{x \sim D}[\sum_i v_i(x)]$ over all distributions $D \in \D$.
Analogous to the VCG mechanism, there is a (randomized) payment rule
that can be coupled with this allocation rule to yield a
truthful-in-expectation mechanism (see~\cite{DD09}).

\subsection{Combinatorial Auctions}

In \emph{Combinatorial Auctions} there is a set
$[m]=\set{1,2,\ldots,m}$ of items, and a set
$[n]=\set{1,2,\ldots,n}$ of players. Each player $i$ has a valuation
function $v_i:2^{[m]}\rightarrow \RRp$ that is normalized
($v_i(\emptyset) = 0$) and monotone ($v_i(A) \leq v_i(B)$ whenever
$A \sse B$). A feasible solution is an \emph{allocation}
$(S_1,\ldots,S_n)$, where $S_i$ denotes the items assigned to player
$i$, and $\set{S_i}_i$ are mutually disjoint subsets of $[m]$. Player $i$'s value for outcome $(S_1,\ldots,S_n)$ is equal to $v_i(S_i)$.  The
goal is to choose the allocation maximizing \emph{social welfare}:
$\sum_i v_i(S_i)$.

\subsection{Matroid Rank Sum Valuations}\label{sec:MRS}

%In order to define MRS valuations, we need to first recall basic
%concepts from matroid theory. An overview of the basic concepts we use
%here, such as the definition of matroids and their rank functions, is
We now define matroid rank sum valuations. Relevant concepts from matroid theory are reviewed in Appendix~\ref{app:matroids}. 
\begin{definition}\label{def:mrs}
A set function $v:2^{[m]} \to \RRp$ is a \emph{matroid rank sum (MRS)}
function if there exists a family of matroid rank functions
$u_1,\ldots,u_\kappa: 2^{[m]} \to \NN$, and associated non-negative
weights $w_1,\ldots,w_\kappa \in \RR^+$, such that $v(S) =
\sum_{\ell=1}^\kappa w_\ell u_\ell(S)$ for all $S \sse [m]$. 
\end{definition}
%
%If we view a set function on ground set $[m]$ as a vector indexed by
%sets $S \in 2^{[m]}$, it is easy to see that MRS valuations form a
%\emph{convex cone}. Our results rely on the geometric properties of
%this cone, and not on how we represent its member functions. 
We do not assume any particular representation of MRS valuations, and
require only oracle access to 
their (expected) values on certain distributions (see Section
\ref{sec:lotteryval}). 
%In particular, the results of Section
%\ref{sec:CA} are agnostic to the number of matroids $\kappa$ in the
%weighted combination above -- it may be arbitrarily large, even
%infinite as far as we care\footnote{We do note, however, that there is
%  a finite -- though doubly exponential  -- number of matroids on any
%  finite ground set. Therefore, every MRS function can theoretically
%  be written as a finite weighted sum of matroid rank functions.}.  
MRS functions include most concrete examples of monotone submodular
functions that appear in the literature -- this includes 
coverage functions\footnote{A \emph{coverage function} $f$ on ground set
  $[m]$ designates some set $\L$ of elements, and $m$ subsets $A_1,\ldots,A_m \sse
  \L$, such that $f(S) = | \union_{j \in S} A_j|$. We note that
  $\L$ may be an infinite, yet measurable, space. Coverage functions
  are arguably {\em the} canonical example of a submodular
  function, particularly for combinatorial auctions.}, matroid weighted-rank
functions\footnote{This is a generalization of matroid rank functions,
  where weights are placed on elements of the matroid. It is true,
  though not immediately obvious, that a matroid weighted-rank function can be
  expressed as a weighted combination of matroid (unweighted) rank
  functions -- see e.g. \cite{revenuesubmod}.}, and all convex
combinations thereof.  Moreover, as shown in \cite{KhotLMM08}, $1-1/e$
is the best approximation possible in polynomial time for combinatorial auctions with MRS valuations unless $P = NP$, even ignoring strategic considerations. That being said, we note that some interesting submodular functions  --- such as some budget additive functions\footnote{  A set function $f$ on ground set $[m]$  is \emph{budgeted additive} if
  there exists a constant $B\geq 0$ (the budget) such that $f(S) =
  \min(B,\sum_{j \in S} f(\set{j}))$.
} --- are not in the matroid rank sum family (see Appendix~\ref{sec:beyondMRS}). 

% We show in Appendix
%~\ref{app:separation} a sense in which
% MRS valuations capture much of the
% difficulty in designing truthful mechanisms for
% combinatorial auctions with submodular bidder valuations ---
% %In particular, we observe that the result of
% %\cite{BDFKMPSSU10} implies that 
% no universally truthful, polynomial-time 
% mechanism for combinatorial auctions with MRS valuations can achieve a constant-factor
% approximation of welfare, assuming $NP \nsubseteq P/poly$. 
% %Thus, as we show in Section
% %\ref{sec:separation}, we achieve the first separation between the
% %power of universally truthful and truthful-in-expectation mechanisms
% %in the domain of combinatorial auctions. 

\subsection{Lotteries and  Oracles}
\label{sec:lotteryval}

A \emph{value oracle} for a valuation $v: 2^{[m]} \to \RR$ takes as
input a set $S \sse [m]$, and returns $v(S)$.  We define an analogous
oracle that takes in a description of a simple lottery  over subsets of $[m]$, and outputs the expectation of $v$ over this lottery. 
%In
%particular, we use the simplest possible such lotteries: each item is
%included in $S$ independently with some probability. 
Given a vector
$x\in [0,1]^m$ of probabilities on the items, let $D_x$ be the distribution over  $S \sse [m]$ that includes  each item $j$ in  $S$ independently with probability $x_j$. 
We use $F_v(x)$ to denote the expected value of $v(S)$ over draws $S \sim
D_x$ from this lottery.  

\begin{definition}
  A \emph{lottery-value oracle} for set function $v: 2^{[m]} \to \RR$
  takes as input a vector $x \in [0,1]^m$, and outputs
  \begin{equation}\label{eq:F} F_v(x) =\expect[S \sim D_x]{v(S)} =
  \sum_{S \sse [m]} v(S) \prod_{j \in S} x_j \prod_{j \neq S}
  (1-x_j).
\end{equation}  
\end{definition}
We note that $F_v$ is simply the well-studied \emph{multi-linear
  extension} of $v$ (see for example \cite{CCPV07,V08}).
In addition to being the
natural randomized analog of a value oracle, a lottery-value oracle is
easily implemented for various succinctly represented examples of MRS
valuations, like explicit coverage functions (see Appendix
\ref{app:separation}).

We also note that lottery-value oracle queries can be approximated arbitrarily well with high probability using a polynomial number of value oracle queries (see \cite{V08}).  Unfortunately, we are not able  to reconcile the incurred sampling errors --- small as they may be --- with the requirement that our mechanism be \emph{exactly} truthful. We suspect that relaxing our solution concept to approximate truthfulness -- also known as $\epsilon$-truthfulness -- would remove this difficulty, and allow us to relax our oracle model to the more traditional value oracles.

\section{Convex Rounding Framework}
\label{sec:framework}

%The most common paradigm for designing approximation algorithms is
%based on first \emph{relaxing} the problem, solving the relaxation,
%then \emph{rounding} the fractional solution to an integral one via a
%possibly randomized procedure. We propose a modification of this
%paradigm that yields MIDR approximation algorithms. In particular, we
%propose taking the rounding into account in the objective of the
%optimization problem, by optimizing directly on the outcome of the
%rounding scheme. For this to yield a tractable optimization problem,
%we propose rounding schemes that are \emph{concave}. Optimizing over
%the outcome of such a rounding scheme reduces to solving a convex
%optimization problem. The result is a maximal in distributional range
%allocation rule that can be implemented in polynomial time using the
%ellipsoid method. Using VCG payments, the allocation rule can be
%converted to a truthful-in-expectation, polynomial time approximate
%mechanism.  

\subsection{Relaxations and Rounding Schemes}

Let $\Pi$ be an optimization problem. A \emph{relaxation} $\Pi'$ of $\Pi$ defines for every $(\S,w) \in \Pi$
a convex and compact \emph{relaxed feasible set} $\R\sse \RR^m$ that
is independent of $w$ (we suppress the dependence on~$\S$);
and an \emph{extension} $w_\R: \R \to \RR$ of the objective $w$ to the
relaxed feasible set $\R$.  
This gives the following \emph{relaxed optimization problem}. 
\begin{lp}\label{lp:relaxation}
  \maxi{w_\R(x)}
\st
\con{x \in \R.}
\end{lp}
Generally, the extension is
defined so that it is computationally tractable to find a point $x \in
\R$ that maximizes $w_\R(x)$ (possibly approximately).

For example, $\S$ could be the allocations of~$m$ items to~$n$ bidders
in a combinatorial auction, $w(x)$  the welfare of an
allocation, $\R$ the feasible region of a linear
programming relaxation, and $w_{\R}$ the natural linear extension
of~$w$ to fractional allocations.

%Typically, the approximation algorithms literature employs relaxations
%as follows. The extension $w_\R$ is invariant on $\S$ -- i.e. $w_\R(x)
%= w(x)$ for each $x \in \S$. Very often, $w$ is simply a linear
%function restricted to $\S$, and the extension simply un-restricts the
%domain of $w$. Sometimes, however, $w$ is nonlinear and a less
%elementary extension is necessary -- for example, see the multi-linear
%extension of \cite{vondrak}. In most applications, the extension is
%defined so that it is computationally tractable to find a point $x \in
%\R$ maximizing $w_\R(x)$ (possibly approximately).

The solution $x \in \R$ to the relaxed problem need not be in
$\S$. 
%Therefore, one needs to \emph{round} the (possibly infeasible)
%solution $x$ to a feasible solution $r(x) \in \S$, ideally with the
%guarantee that $w(r(x)) \geq \alpha w_\R(x)$, for some approximation
%factor $\alpha$. 
A \emph{rounding scheme}  for relaxation $\Pi'$ of
$\Pi$ defines for each feasible set $\S$ of $\Pi$, and its
corresponding relaxed set  $\R$, a  (possibly randomized) function $r:
\R \to \S$. Since our rounding scheme will be randomized, we will frequently use $r(x)$ to denote the distribution over $\S$ resulting from rounding the point $x\in \R$.  Commonly, the rounding scheme satisfies the following
approximation guarantee: $\ex_{y \sim r(x)}[ w(y) ] \geq \alpha \cdot w_\R(x)$ for
every $x \in \R$. In this case, if $x^*$ maximizes $w_\R$ over $\R$
and $w_\R$ agrees with $w$ on $\S$, then
$\ex_{y \sim r(x^*)}[w(y)] \geq \alpha \cdot \max_{y \in \S}w(y)$.

\subsection{Convex Rounding Schemes and MIDR}

Our technique is motivated by the following observation: instead of
solving the relaxed problem and subsequently rounding the solution,
why not \emph{optimize directly on the outcome of the rounding
scheme}? 
In particular, consider the following relaxation 
 of $\Pi$
that ``absorbs'' rounding scheme $r$ into the objective. 
\begin{lp}
\label{lp:absorbrounding}
  \maxi{\ex_{y \sim r(x)}[ w(y) ]}
  \st
  \con{x \in \R.}
\end{lp}
The solution to this problem rounds to the best possible distribution in the range of the
rounding scheme, over all possible fractional solutions in $\R$. 
While this problem is often intractable, it always leads to an MIDR
allocation rule.

\begin{algorithm}
\caption{MIDR Allocation Rule via Optimizing over Output of Rounding Scheme}
\label{alg:midr}
\begin{algorithmic}[1]
\PARAMETER Feasible set $\S$ of $\Pi$.
\PARAMETER Relaxed feasible set $\R \sse \RR^m$.
\PARAMETER (Randomized) rounding scheme $r: \R \to \S$.
\INPUT Objective $w: \S \to \RR$ satisfying $(\S,w) \in \Pi$.
\OUTPUT Feasible solution $z \in \S$.
\STATE Let $x^*$ maximize ${\ex_{y \sim r(x)}[ w(y) ]}$ over
${x \in \R}$.
\STATE Let $z \sim r(x^*)$ \label{algstep:sample}
 \end{algorithmic}
\end{algorithm}

\begin{lemma} \label{lem:MIDR}
  Algorithm~\ref{alg:midr} is an MIDR allocation rule.
\end{lemma}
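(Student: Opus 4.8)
The plan is to show that Algorithm~\ref{alg:midr} fits the definition of an MIDR allocation rule given in Section~\ref{sec:MD}: namely, that it fixes a distributional range $\D \sse dist(\S)$ independent of the reported valuations, and, on any input, outputs a sample from the distribution in $\D$ maximizing expected reported welfare.

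First I would define the distributional range explicitly. Set $\D = \set{r(x) : x \in \R}$, the collection of distributions over $\S$ obtained by running the (fixed) rounding scheme $r$ on each point of the (fixed) relaxed feasible set $\R$. The crucial point is that both $\R$ and $r$ are parameters of the algorithm, determined by $\S$ alone and in particular independent of the objective $w$ (equivalently, of the reported valuations $v_1,\ldots,v_n$). Hence $\D$ is a valid distributional range in the sense of Section~\ref{sec:MD}.

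Next I would verify the optimization property. For welfare maximization we have $w(y) = \sum_i v_i(y)$, so for any $x \in \R$ the objective of~\eqref{lp:absorbrounding} is $\ex_{y \sim r(x)}[w(y)] = \ex_{y \sim r(x)}[\sum_i v_i(y)]$, which is exactly the expected welfare of the distribution $r(x) \in \D$. Step~1 of the algorithm chooses $x^*$ maximizing this quantity over $x \in \R$, so $r(x^*)$ is a distribution in $\D$ of maximum expected welfare; Step~\ref{algstep:sample} then returns a sample from $r(x^*)$. This is precisely the behavior required of an MIDR allocation rule.

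The only subtlety — and the main thing to address carefully rather than the main obstacle — is the distinction between maximizing over $x \in \R$ and maximizing over distributions $D \in \D$: a priori $\D$ could have elements not of the form $r(x)$, or the max over $\D$ might exceed the max over $\R$. But by construction $\D$ is exactly the image of $\R$ under $r$, so every distribution in $\D$ is $r(x)$ for some $x \in \R$, and the two suprema coincide; thus a maximizer over $\R$ yields a maximizer over $\D$. I would also note in passing that compactness of $\R$ (together with whatever mild regularity makes $x \mapsto \ex_{y\sim r(x)}[w(y)]$ attain its maximum, as implicitly assumed when the algorithm writes ``let $x^*$ maximize'') ensures $\D$ can be taken compact and the maximizing distribution exists, matching the setup of Section~\ref{sec:MD}. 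The argument is otherwise immediate from unwinding definitions, so no hard step remains.
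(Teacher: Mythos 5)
Your proof is correct and matches the (omitted, evidently-considered-immediate) argument the paper intends: take the distributional range to be $\D = \set{r(x) : x \in \R}$, observe that $\R$ and $r$ are fixed by $\S$ alone and hence independent of the reported valuations, and note that Step~1 maximizes expected reported welfare over this range while Step~\ref{algstep:sample} samples from the maximizer. The paper does not spell this out precisely because it is this unwinding of definitions; your write-up is a faithful and appropriately careful version of it.
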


We say a rounding scheme $r: \R \to \S$ is
\emph{$\alpha$-approximate} for $\alpha \le 1$
if $w(x) \ge \ex_{y \sim r(x)}[w(y)] \geq \alpha \cdot w(x)$ for every $x \in \S$. 
When $r$ is $\alpha$-approximate, so is the allocation rule of
Algorithm~\ref{alg:midr}.
\begin{lemma} \label{lem:approx}
If $r$ is an $\alpha$-approximate rounding scheme, then Algorithm
\ref{alg:midr} returns an $\alpha$-approximate solution (in
expectation) to the original optimization problem~\eqref{lp:problem}.
\end{lemma}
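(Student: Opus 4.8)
The plan is to read off the result directly from the definitions, once the feasible sets are lined up correctly. Write $\mathrm{OPT} = \max_{x \in \S} w(x)$ for the optimal value of the original problem~\eqref{lp:problem}, attained at some $x^{\mathrm{opt}} \in \S$. The one structural fact I would invoke is that a relaxation comes equipped with an \emph{extension} $w_\R$ of $w$ to $\R$; in particular $\S \sse \R$, so $x^{\mathrm{opt}}$ is itself a feasible point for the maximization carried out in line~1 of Algorithm~\ref{alg:midr}.

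First I would establish the lower bound. Since $x^*$ maximizes $\ex_{y \sim r(x)}[w(y)]$ over $x \in \R$ and $x^{\mathrm{opt}} \in \S \sse \R$, we get $\ex_{y \sim r(x^*)}[w(y)] \ge \ex_{y \sim r(x^{\mathrm{opt}})}[w(y)]$. Now apply the hypothesis that $r$ is $\alpha$-approximate \emph{at the point $x^{\mathrm{opt}} \in \S$}, which gives $\ex_{y \sim r(x^{\mathrm{opt}})}[w(y)] \ge \alpha \cdot w(x^{\mathrm{opt}}) = \alpha \cdot \mathrm{OPT}$. Chaining these inequalities and using that the output $z$ of Algorithm~\ref{alg:midr} is distributed as $r(x^*)$ yields $\ex[w(z)] \ge \alpha \cdot \mathrm{OPT}$, the desired guarantee. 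For completeness I would also note the trivial upper bound: since $r$ maps into $\S$, the returned $z$ lies in $\S$ with probability one, so $w(z) \le \mathrm{OPT}$ pointwise and hence $\ex[w(z)] \le \mathrm{OPT}$; together with the lower bound this matches the two-sided form of ``$\alpha$-approximate'' used for rounding schemes.

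There is essentially no hard step here. The one point I would make explicit — the closest thing to a subtlety — is the apparent mismatch between the rounding scheme's guarantee, which is quantified only over $x \in \S$, and the optimization in Algorithm~\ref{alg:midr}, which ranges over all of $\R$: the argument goes through precisely because we only ever need the guarantee at the single integral point $x^{\mathrm{opt}}$, and the inclusion $\S \sse \R$ ensures that point is available as a competitor to $x^*$. A secondary, more pedantic concern is whether the maximizer $x^*$ in line~1 exists at all (continuity of $x \mapsto \ex_{y \sim r(x)}[w(y)]$ is not assumed); I would either take its existence as given by the statement of the algorithm, or observe that replacing $x^*$ by any point achieving the supremum up to an additive $\eps$ weakens the conclusion only by $\eps$, which is immaterial.
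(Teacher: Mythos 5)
Your proof is correct and is exactly the routine argument the paper implicitly has in mind: the paper states Lemma~\ref{lem:approx} without proof, preceded only by the one-line remark that ``when $r$ is $\alpha$-approximate, so is the allocation rule of Algorithm~\ref{alg:midr}.'' You have correctly filled in that gap via the chain $\ex_{y\sim r(x^*)}[w(y)] \ge \ex_{y\sim r(x^{\mathrm{opt}})}[w(y)] \ge \alpha\, w(x^{\mathrm{opt}})$, and you rightly flagged the one subtlety worth saying aloud --- that the paper's definition of $\alpha$-approximate rounding quantifies only over $x\in\S$, not over all of $\R$, so the argument hinges on the inclusion $\S\subseteq\R$ (which follows from $w_\R$ being an \emph{extension} of $w$) making $x^{\mathrm{opt}}$ a valid competitor at line~1. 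Your aside about existence of the maximizer $x^*$ is harmless pedantry given compactness of $\R$; the upper-bound observation is likewise trivially true since $r$ maps into $\S$.
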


%Given a hard optimization problem, and an approximation algorithm that
%solves a relaxation and then rounds the solution, one would be tempted
%to apply Lemmas \ref{lem:MIDR} and~\ref{lem:approx} in order to
%convert the algorithm into an MIDR algorithm without degrading its
%approximation ratio. Unfortunately, f
For most rounding schemes
%traditionally employed 
in the approximation algorithms literature, the
optimization problem~\eqref{lp:absorbrounding} cannot be
solved in polynomial time (assuming $P \neq NP$). 
%This is because most traditional 
%rounding schemes 
The reason is that for any rounding scheme that always rounds a feasible
solution to itself -- i.e., $r(x)=x$ 
for all $x \in \S$ ---
% Therefore, 
an optimal solution to \eqref{lp:absorbrounding} is also optimal for
\eqref{lp:problem}. 
Thus, in this case, hardness of the original problem
\eqref{lp:problem} implies hardness of
\eqref{lp:absorbrounding}. 
We conclude that we need to design rounding schemes with the unusual
property that $r(x) \neq x$ for some $x\in \S$. 

%We make the following crucial observation: It may be possible to
%design rounding schemes that render \eqref{lp:absorbrounding} a convex
%optimization problem, thereby ensuring polynomial-time solvability of
%the corresponding MIDR problem. 
We call a (randomized)  rounding scheme
$r: \R \to \S$  \emph{convex} if  $\ex_{y \sim r(x)}[ w(y) ]$ is concave
function of $x \in \R$.  
\begin{lemma}\label{lem:convexopt}
When $r$ is a convex rounding scheme for $\Pi'$,
\eqref{lp:absorbrounding} is a convex optimization problem.
\end{lemma}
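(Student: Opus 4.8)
\textbf{Proof plan for Lemma~\ref{lem:convexopt}.}

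The claim is essentially immediate from the definitions, so the ``proof'' is really a matter of unpacking what a convex optimization problem is and checking that \eqref{lp:absorbrounding} meets the two requirements: a concave objective to be maximized, and a convex feasible region. The plan is as follows. First I would recall that, by definition of a relaxation $\Pi'$ of $\Pi$, the relaxed feasible set $\R \sse \RR^m$ is convex and compact; this takes care of the constraint set in \eqref{lp:absorbrounding} with no work. Second, I would recall that the objective of \eqref{lp:absorbrounding} is the function $x \mapsto \ex_{y \sim r(x)}[w(y)]$, and that the hypothesis of the lemma is precisely that $r$ is a convex rounding scheme, which by definition means this function is concave on $\R$. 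Putting these two observations together, \eqref{lp:absorbrounding} asks to maximize a concave function over a convex compact set, which is by definition a convex optimization problem.

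There is essentially no obstacle here — the lemma is a definitional bridge whose only purpose is to license the use of convex programming in the sequel. If I wanted to be slightly more careful, I would note that ``convex optimization problem'' is sometimes stated in the minimization convention, in which case one observes that maximizing a concave function is the same as minimizing its (convex) negation over the same convex set, so the two formulations coincide. I would also remark that this is exactly the point at which the earlier design constraint — that a useful convex rounding scheme must satisfy $r(x)\neq x$ for some $x\in\S$ — pays off: were $r$ the identity on $\S$, concavity of $x\mapsto \ex_{y\sim r(x)}[w(y)]$ would fail for the welfare objectives of interest (indeed it would make the NP-hard problem \eqref{lp:problem} a convex program), so the lemma would have no nontrivial instances. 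The real content, deferred to later sections, is to exhibit such an $r$ for MRS valuations; Lemma~\ref{lem:convexopt} itself is just the observation that once such an $r$ is in hand, the optimization step of Algorithm~\ref{alg:midr} is tractable.
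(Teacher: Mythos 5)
Your proof is correct and matches the paper's intent exactly: the lemma is treated there, too, as a purely definitional observation (the paper in fact states it without a separate proof), and your unpacking — convex compact feasible region from the definition of a relaxation, concave objective from the definition of a convex rounding scheme — is precisely the argument.
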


Under additional technical conditions, discussed in the context of combinatorial auctions in Appendix~\ref{sec:solveCA}, the convex program~\eqref{lp:absorbrounding}
can be solved efficiently (e.g., using the ellipsoid method). 
This reduces the design of a polynomial-time $\alpha$-approximate MIDR
algorithm to designing a polynomial-time $\alpha$-approximate convex
rounding scheme.

%More technical issues remain, however: convex optimization problems
%can be solved to arbitrary precision,  yet MIDR allocation rules
%require an \emph{exact} optimal solution. We discuss these
%difficulties in Section~\ref{sec:technical_general}, and show how to
%overcome them for combinatorial auctions in Section~\ref{??}. In the
%meantime, if we temporarily ignore these challenges, 
Summarizing, Lemmas
\ref{lem:MIDR}, \ref{lem:approx}, and \ref{lem:convexopt} give the
following informal theorem. 
\begin{theorem}{(Informal)}
Let $\Pi$ be a welfare-maximization optimization problem, and let
$\Pi'$ be a relaxation of $\Pi$. If there exists a polynomial-time,
$\alpha$-approximate, convex rounding scheme for $\Pi'$, then there
exists a truthful-in-expectation, polynomial-time,
$\alpha$-approximate mechanism for $\Pi$.
\end{theorem}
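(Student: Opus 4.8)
The plan is to assemble the informal theorem directly from the three lemmas already stated, treating it as a pipeline: convex rounding scheme $\Rightarrow$ MIDR allocation rule $\Rightarrow$ truthful-in-expectation mechanism via VCG-style payments, with the approximation and running-time guarantees tracked along the way. First I would instantiate Algorithm~\ref{alg:midr} with the given feasible set $\S$, the relaxed feasible set $\R$ of the relaxation $\Pi'$, and the hypothesized $\alpha$-approximate convex rounding scheme $r$. By Lemma~\ref{lem:MIDR}, the resulting allocation rule is MIDR; by Lemma~\ref{lem:approx}, it returns an $\alpha$-approximate solution in expectation; and by Lemma~\ref{lem:convexopt}, the optimization problem~\eqref{lp:absorbrounding} that the rule must solve is a convex program. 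So the only remaining work is (a) arguing that this convex program can actually be solved in polynomial time, and (b) converting the MIDR allocation rule into a truthful-in-expectation mechanism.

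For step (b), I would invoke the standard fact (cited to~\cite{DD09} in Section~\ref{sec:MD}) that any MIDR allocation rule over a compact distributional range can be paired with a VCG-based randomized payment rule so that the combined mechanism satisfies the truthfulness inequality~\eqref{eq:truthful} in expectation; the payments are bid-independent pivot terms minus the expected reported welfare of the other players under the chosen distribution, exactly as in the classical VCG argument, and the usual exchange argument shows no player gains by misreporting. I would note that computing these payments requires solving~\eqref{lp:absorbrounding} a further $n$ times (once per player, on the sub-instance excluding that player), which preserves the polynomial running time. For step (a), I would appeal to the fact that a convex program over a convex compact region can be solved to sufficient accuracy by the ellipsoid method, given a separation oracle for $\R$ and the ability to evaluate (a supergradient of) the concave objective $\ex_{y \sim r(x)}[w(y)]$; for combinatorial auctions these are available, and I would defer the precise technical conditions and the numerical/exactness subtleties to the discussion in Appendix~\ref{sec:solveCA}, as the excerpt does.

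The main obstacle is genuinely the numerical one alluded to in the paper: convex optimization via the ellipsoid method returns only an approximately optimal fractional point, and an approximately optimal point of~\eqref{lp:absorbrounding} need not give an \emph{exactly} MIDR distribution, which in turn threatens exact truthfulness of the VCG payments rather than just the approximation ratio. So the crux of a complete proof is to show either that the convex program can be solved exactly (exploiting special structure of $\R$ and of the particular convex rounding scheme), or that the approximate solution can be post-processed into a point that is exactly optimal over a slightly restricted but still rich distributional range, so that Lemma~\ref{lem:MIDR} and the VCG payment argument still apply verbatim. I would signpost that this is handled for the combinatorial-auction setting in Appendix~\ref{sec:solveCA}, and that everything else in the theorem is an immediate consequence of Lemmas~\ref{lem:MIDR}, \ref{lem:approx}, and~\ref{lem:convexopt} together with the MIDR-to-truthful reduction. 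Since the statement is explicitly labelled informal, I would keep the write-up at the level of this reduction rather than reproving the ellipsoid-method and VCG-payment facts.
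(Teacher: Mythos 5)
Your proposal matches the paper's own argument exactly: the theorem is obtained by composing Lemmas~\ref{lem:MIDR}, \ref{lem:approx}, and~\ref{lem:convexopt}, invoking the MIDR-to-truthful reduction of~\cite{DD09} for the payment rule, and deferring the ellipsoid-method and numerical-exactness details to Appendix~\ref{sec:solveCA}. You also correctly identify the approximate-versus-exact-optimality issue as the genuine subtlety, which is precisely what the paper treats there.
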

Of course, there is no reason a priori to believe that useful
convex rounding schemes -- let alone ones computable in polynomial
time -- exist for any important problems. 
We show in Section~\ref{sec:CA} that they do in fact exist
and yield new results for an interesting class of combinatorial
auctions.

\section{Combinatorial Auctions}
\label{sec:CA}
\def\poiss{\qopname\relax n{poiss}}
\newcommand{\rp}{r_{\poiss}}

In this section, we use the framework of Section~\ref{sec:framework}
to prove our main result. 
%design a $(1-1/e)$-approximate,  truthful-in-expectation mechanism
%for combinatorial auctions with 
%Matroid Rank Sum (MRS) valuations (Section~\ref{sec:MRS}) that support
%lottery-value oracles (Section~\ref{sec:lotteryval}).
\begin{thm}\label{thm:CAmain}
There is a $(1-1/e)$-approximate, truthful-in-expectation
mechanism for combinatorial auctions with matroid rank sum valuations in the
lottery-value oracle model, running in expected $\poly(n,m)$ time.
\end{thm}

We formulate welfare maximization in
combinatorial auctions as an optimization problem
$\Pi$. An instance $(\S,w) \in \Pi$ is given by the following integer program with feasible set $\S$ contained in
$\set{0,1}^{n \times m}$. Variable $x_{ij}$ indicates whether item $j$ is allocated to player $i$, and $w(x)$ denotes the social welfare of allocation $x$.
\begin{lp}\label{lp:CAinteger}
\maxi{w(x) = \sum_i v_i(\set{j: x_{ij}=1})}
\st
\qcon{\sum_{i} x_{ij} \leq 1}{j\in [m]}
\qcon{x_{ij} \in \set{0,1}}{i\in[n], j\in[m]}
\end{lp}
We let the relaxed feasible set $\R=\R(\S)$ be the result of relaxing the constraints $x_{ij} \in \set{0,1}$ of \eqref{lp:CAinteger} to $0 \leq x_{ij} \leq 1$. 

We structure the proof of Theorem~\ref{thm:CAmain} as follows. We
define the \emph{Poisson rounding scheme}, which we denote by $\rp$, in Section
\ref{subsec:convexCA}. We prove that $\rp$ is
$(1-1/e)$-approximate (Lemma~\ref{lem:poissonapprox}), and convex
(Lemma~\ref{lem:poissonconvex}).  Lemmas \ref{lem:MIDR},
\ref{lem:approx} and \ref{lem:poissonapprox}, taken together, imply
that Algorithm~\ref{alg:midr} when instantiated for combinatorial
auctions with $r=\rp$, is a $(1-1/e)$-approximate
MIDR allocation rule. Lemma~\ref{lem:poissonconvex} reduces implementing this allocation rule  to solving a convex program.

In Appendix~\ref{sec:solveCA}, we handle the  technical and
numerical issues related to solving convex programs. First, we prove
that our instantiation of Algorithm~\ref{alg:midr} for combinatorial
auctions can be implemented in expected polynomial-time using the
ellipsoid method under a simplifying assumption on the numerical
conditioning of our convex program (Lemma
\ref{lem:solveCAconditioned}). Then we show in Section
\ref{sec:noise} that the previous assumption can be
removed by slightly modifying our algorithm. 
%at the cost of $o(1)$ in the approximation ratio of Lemma~\ref{lem:approx} and 
%without otherwise
%affecting our results.   
%payments nontrivial?? Should I uncomment this statment and write up
%payment lemma? 

Finally, we prove that truth-telling VCG payments can be computed
efficiently in Lemma~\ref{lem:compute_payments}.  
Taken together, these lemmas complete the proof of Theorem
\ref{thm:CAmain}. In Appendix~\ref{sec:beyondMRS}, we  discuss prospects for extending our result beyond matroid rank sum valuations.

\subsection{The Poisson Rounding Scheme}
\label{subsec:convexCA}

%We now follow the framework of Section~\ref{sec:framework} by
%devising a relaxation and a $(1-1/e)$-approximate, convex rounding
%scheme for combinatorial auctions. We will use the matroid
%formulation of combinatorial auctions, and the corresponding
%relaxation to the matroid polytope, as presented in example
%\cite{ex:matroidCA}.  

In this section we define the
\emph{Poisson rounding scheme}, which we denote by  $\rp$. The random map $\rp:\R\to\S$ renders the the following optimization problem over $\R$  a convex
optimization problem.  
\begin{lp}\label{lp:CArelaxation}
  \maxi{f(x)=\ex_{y \sim \rp(x)}[w(y)]}
\st
\qcon{\sum_{i} x_{ij} \leq 1}{j\in[m]}
\qcon{0 \leq x_{ij} \leq 1}{i\in[n], j \in [m]}
\end{lp}
We define the Poisson rounding scheme as follows. Given a fractional
solution $x$ to \eqref{lp:CArelaxation}, do the following
independently for each item $j$: assign $j$ to player $i$ with
probability $1-e^{-x_{ij}}$. 
(This is well defined since $1-e^{-x_{ij}} \le x_{ij}$ for all players $i$ and items $j$, and $\sum_i x_{ij} \leq 1$ for all items $j$.)
We make this more precise in Algorithm~\ref{alg:round}. For clarity, we represent an allocation as a function from items to players, with an additional null player $*$ reserved for items that are left unassigned.
\begin{algorithm}
\caption{The Poisson Rounding Scheme $\rp$}
\label{alg:round}
\begin{algorithmic}[1]
\INPUT Fractional allocation $x$ with $\sum_i x_{ij} \leq 1$ for all $j$, and $0 \leq x_{ij} \leq 1$ for all $i,j$.
\OUTPUT Feasible allocation $a:[m] \to [n] \union \set{*}$.
\FOR{$j=1,\ldots,m$}
\STATE Draw $p_j$ uniformly at random from $[0,1]$.
\IF{$\sum_i (1-e^{-x_{ij}}) \geq p_j$}
\STATE Let $a(j)$ be the minimum index such that $\sum_{i \leq a(j)} (1-e^{-x_{ij}}) \geq p_j$. 
\ELSE
\STATE $a(j) = *$
\ENDIF
\ENDFOR
 \end{algorithmic}
\end{algorithm}
The Poisson rounding scheme is $(1-1/e)$-approximate and convex. The
proof of Lemma~\ref{lem:poissonapprox}  is not difficult, 
and is included below. 
We prove Lemma~\ref{lem:poissonconvex} in
Section~\ref{sec:proveconvex}. As a warm-up, we first present a simplified proof of  Lemma~\ref{lem:poissonconvex} for the special case of coverage valuations in Section~\ref{sec:coverageconvex}.

\begin{lemma}\label{lem:poissonconvex}
 The Poisson rounding scheme is convex when player valuations are matroid rank sum  functions.  
\end{lemma}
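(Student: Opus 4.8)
The plan is to show that $f(x) = \ex_{y \sim \rp(x)}[w(y)]$ is concave in $x$ by exploiting the structure of MRS valuations and the product form of the Poisson rounding distribution. Since $w(x) = \sum_i v_i(S_i)$ decomposes additively over players, and the rounding scheme acts independently on each item, it suffices to show that for a single player $i$ with valuation $v_i$, the function $g_i(x_i) := \ex[v_i(S_i)]$ is concave in the vector $x_i = (x_{i1},\ldots,x_{im})$, where $S_i$ is the (random) set of items assigned to player $i$. Under the Poisson rounding scheme, item $j$ is assigned to player $i$ independently with probability $q_{ij} := 1 - e^{-x_{ij}}$; hence $S_i \sim D_{q_i}$ where $q_i = (q_{i1},\ldots,q_{im})$, and therefore $g_i(x_i) = F_{v_i}(q_i)$, the multilinear extension of $v_i$ evaluated at the point $q_i$. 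So the claim reduces to: the map $x_i \mapsto F_{v_i}(1 - e^{-x_{i1}}, \ldots, 1 - e^{-x_{im}})$ is concave.

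The key tool is the fact — which I would invoke or reprove from matroid theory — that for a matroid rank function $u$, the multilinear extension $F_u$ has nonpositive second partial derivatives everywhere on $[0,1]^m$, i.e. $\frac{\partial^2 F_u}{\partial z_j \partial z_k} \le 0$ for all $j, k$ (including $j = k$; the diagonal second derivative of a multilinear function is zero, but the relevant Hessian after the substitution is what matters). More precisely, $F_u$ is coordinate-wise concave and has the property that $\partial_j F_u$ is nonincreasing in every coordinate — this is the standard ``submodularity in the continuous sense'' of the multilinear extension of a submodular function. By linearity of $F$ in the valuation and nonnegativity of the weights $w_\ell$, the same holds for $F_v$ when $v$ is any MRS function. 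Now I would compute the Hessian of the composition $h(x_i) := F_v(\phi(x_i))$ where $\phi_j(t) = 1 - e^{-t}$ applied coordinatewise. Writing $z_j = \phi_j(x_{ij})$, we have $\partial h / \partial x_{ij} = (\partial_j F_v)(z) \cdot e^{-x_{ij}}$, and differentiating again:
\begin{equation*}
\frac{\partial^2 h}{\partial x_{ij} \partial x_{ik}} = (\partial_j \partial_k F_v)(z)\, e^{-x_{ij}} e^{-x_{ik}} \quad (j \ne k), \qquad \frac{\partial^2 h}{\partial x_{ij}^2} = (\partial_j^2 F_v)(z)\, e^{-2x_{ij}} - (\partial_j F_v)(z)\, e^{-x_{ij}}.
\end{equation*}
Since $F_v$ is multilinear, $\partial_j^2 F_v = 0$, so the diagonal term is $-(\partial_j F_v)(z) e^{-x_{ij}} \le 0$ (using monotonicity, $\partial_j F_v \ge 0$). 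For the off-diagonal terms, $\partial_j \partial_k F_v \le 0$ by submodularity of the multilinear extension. So the Hessian of $h$ has nonpositive diagonal and nonpositive off-diagonal entries. That alone does not give negative semidefiniteness, so I would argue concavity more carefully: the cleanest route is to note that $h$ is concave along every line, which follows from writing $h$ directly as an expectation. Specifically, $h(x_i) = \sum_{S} v(S) \prod_{j \in S}(1 - e^{-x_{ij}}) \prod_{j \notin S} e^{-x_{ij}}$, and I would show each such term, when summed against the MRS structure, is concave — or, better, use the known result (from Vondr\'ak's analysis of the continuous greedy / multilinear extension) that along any nonnegative-direction line the multilinear extension of a submodular function is concave, combined with the fact that the substitution $x \mapsto 1 - e^{-x}$ composed appropriately preserves this.

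The main obstacle I anticipate is precisely this last point: coordinate-wise concavity plus submodularity of $F_v$ is not by itself enough to conclude joint concavity of the composition $h$, because the substitution $\phi$ mixes coordinates through the exponential in a way that could in principle create positive curvature. The resolution — and I expect this is the heart of the authors' argument — is the special "cross-derivative domination" enjoyed by multilinear extensions of submodular functions: one can show $\partial_j \partial_k F_v(z) \ge -\sqrt{\text{(something)}}$, or more likely that the particular change of variables $1 - e^{-x}$ is exactly calibrated so that the $-(\partial_j F_v) e^{-x_{ij}}$ diagonal contribution dominates the off-diagonal $(\partial_j \partial_k F_v) e^{-x_{ij}-x_{ik}}$ contributions in a Gershgorin/diagonal-dominance sense, using a matroid-specific inequality relating $\partial_j F_u$ and $\partial_j \partial_k F_u$ (such as $-\partial_j\partial_k F_u \le \min(\partial_j F_u, \partial_k F_u)$, which does hold for rank functions). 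I would make this inequality precise for matroid rank functions via an exchange-argument / direct combinatorial computation on $F_u(z) = \ex_{S \sim D_z}[u(S)]$, then propagate it through the weighted sum and the change of variables to establish that the Hessian of $h$ is diagonally dominant with nonpositive diagonal, hence negative semidefinite, hence $h$ — and therefore $f = \sum_i h$ — is concave.
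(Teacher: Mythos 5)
Your setup is exactly right and matches the paper's: reduce to a single player, write the expected value as $G_v(x) = F_v(1-e^{-x_1},\ldots,1-e^{-x_m})$, and compute the Hessian, obtaining diagonal entries $-(\partial_j F_v)(z)\,e^{-x_j}$ and off-diagonal entries $(\partial_j\partial_k F_v)(z)\,e^{-x_j}e^{-x_k}$. But the argument you propose to close the proof---diagonal dominance (Gershgorin) driven by an inequality of the form $-\partial_j\partial_k F_u \le \min(\partial_j F_u, \partial_k F_u)$---does not work, and you cannot repair it along those lines. Consider the rank-$1$ uniform matroid on $[m]$, so $u(S)=\min(|S|,1)$, and take $x\to 0$. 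Then $\partial_j F_u(0)=1$ for every $j$ while $\partial_j\partial_k F_u(0)=-1$ for every $j\ne k$, so the Hessian of $G_u$ tends to $-J$ (minus the all-ones matrix). This matrix is negative semi-definite (its only nonzero eigenvalue is $-m$), but it is as far from diagonally dominant as possible: each row has diagonal $-1$ and off-diagonal row-sum $m-1$. So Gershgorin gives nothing here, and any bound of the form $-\partial_j\partial_k F_u \le \partial_j F_u$ is off by a factor of $m-1$ when you sum over $k$.

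The paper closes the argument differently, and the mechanism is worth internalizing because it is a genuine structural fact about matroids rather than a quantitative estimate. Define the \emph{discrete Hessian} $\H^v_S(j,k)=v(S\cup\{j,k\})-v(S\cup\{j\})-v(S\cup\{k\})+v(S)$. A short computation (the one you already did, essentially, rewritten by grouping terms by $S$) shows
\begin{equation*}
\grad^2 G_v(x) \;=\; \sum_{S\subseteq[m]} \Bigl(\prod_{\ell\in S}(1-e^{-x_\ell})\prod_{\ell\notin S}e^{-x_\ell}\Bigr)\,\H^v_S,
\end{equation*}
a \emph{nonnegative} combination of the matrices $\H^v_S$. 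So it suffices to show each $\H^v_S$ is NSD. By linearity of $\H^v_S$ in $v$, it suffices to do this for a single matroid rank function $u$. For such $u$, a case analysis shows $-\H^u_S$ is a $\{0,1\}$-matrix with $-\H^u_S(j,k)=1$ iff $\{j\}$ and $\{k\}$ are each independent in the contraction $M/S$ but $\{j,k\}$ is dependent. This relation is symmetric, and the matroid exchange axiom shows it is \emph{transitive}: a symmetric transitive $\{0,1\}$-matrix is block-diagonal with all-ones blocks, hence PSD (as in the $-J$ example above), hence $\H^u_S$ is NSD. That is the missing idea; the rest of your outline (per-player decomposition, the change of variables, linearity over the MRS weights) is correct and aligns with the paper.
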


\begin{lemma}\label{lem:poissonapprox}
The Poisson rounding scheme is $(1-1/e)$-approximate when player valuations
are submodular. 
\end{lemma}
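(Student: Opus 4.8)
The plan is to evaluate $\ex_{y \sim \rp(x)}[w(y)]$ exactly and then sandwich it between $(1-1/e)\,w(x)$ and $w(x)$; the only nontrivial ingredient will be a concavity property of the multilinear extension that encodes submodularity.

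First I would unwind the definition of $\rp$ (Algorithm~\ref{alg:round}) on an integral allocation $x \in \S$. Writing $S_i = \{j : x_{ij} = 1\}$ for the bundle that $x$ gives to player~$i$, the rule assigns each item $j \in S_i$ to player~$i$ with probability $1 - e^{-1}$ and leaves it unassigned otherwise (and items in no $S_i$ are always left unassigned), and these choices are independent across items since the thresholds $p_j$ are drawn independently. Hence, if $T_i$ denotes the random set of items $\rp$ assigns to player~$i$, then $T_i$ is a random subset of $S_i$ that contains each element of $S_i$ independently with probability $1 - 1/e$, and $\ex_{y \sim \rp(x)}[w(y)] = \sum_i \ex[v_i(T_i)]$ (the coupling across players sharing an item is irrelevant, since $v_i(T_i)$ depends on $T_i$ alone).

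For the upper bound $\ex_{y\sim\rp(x)}[w(y)] \le w(x)$: since $T_i \subseteq S_i$ always and $v_i$ is monotone, $\ex[v_i(T_i)] \le v_i(S_i)$, and summing over $i$ finishes this direction. For the lower bound, fix $i$ and let $h_i(s) := \ex[v_i(R_s)]$, where $R_s \subseteq S_i$ contains each element independently with probability $s \in [0,1]$; equivalently $h_i(s) = F_{v_i}(s\cdot\one_{S_i})$ in the notation of \eqref{eq:F}. Then $h_i(0) = v_i(\emptyset) = 0$ and $h_i(1) = v_i(S_i)$, and --- this is where submodularity enters --- $h_i$ is concave on $[0,1]$: it is the restriction of the multilinear extension $F_{v_i}$ to the nonnegative direction $\one_{S_i}$, and for any nonnegative direction $d$ one has $d^\top \nabla^2 F_{v_i}(y)\, d = \sum_{k \neq l} d_k d_l\, \partial^2 F_{v_i}/\partial y_k\partial y_l \le 0$, because the pure second derivatives of the multilinear extension vanish and the mixed ones equal $\ex_{R}[v_i(R\cup\{k,l\}) - v_i(R\cup\{k\}\setminus\{l\}) - v_i(R\cup\{l\}\setminus\{k\}) + v_i(R\setminus\{k,l\})] \le 0$ exactly when $v_i$ is submodular (this is standard; see e.g.~\cite{CCPV07,V08}). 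Concavity together with $h_i(0) = 0$ gives $\ex[v_i(T_i)] = h_i(1-1/e) \ge (1-1/e)\,h_i(1) = (1-1/e)\,v_i(S_i)$, and summing over $i$ yields $\ex_{y\sim\rp(x)}[w(y)] \ge (1-1/e)\,w(x)$.

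I expect the concavity of $h_i$ to be the only step worth spelling out with care --- the lemma is otherwise just the definition of $\rp$ plus monotonicity --- so I would give the one-line second-derivative justification rather than merely cite it, since it is precisely the place where ``submodular'' is used rather than only ``monotone.'' I would also remark that the identical argument --- now invoking the elementary scalar inequalities $1 - e^{-t} \le t$ and $1 - e^{-t} \ge (1-1/e)\,t$ for $t \in [0,1]$ together with coordinatewise monotonicity of $F_{v_i}$ --- shows the stronger statement $(1-1/e)\sum_i F_{v_i}(x_i) \le \ex_{y\sim\rp(x)}[w(y)] \le \sum_i F_{v_i}(x_i)$ for every fractional $x \in \R$, which is the form one actually uses when comparing the output of $\rp$ against the relaxed objective.
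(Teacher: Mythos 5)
Your proof is correct and follows essentially the same route as the paper: reduce to an integral allocation, observe that $\rp$ keeps each item of $S_i$ independently with probability $1-1/e$, and then apply the standard fact that for a monotone submodular $v_i$ with $v_i(\emptyset)=0$, dropping elements independently with probability $1/e$ loses at most a $1/e$ fraction of the value. The only difference is that you prove this last fact directly via concavity of the multilinear extension along the ray $s\mapsto s\cdot\one_{S_i}$, whereas the paper simply cites it (Feige--Mirrokni--Vondr\'ak, Lemma~2.2, and Feige, Proposition~2.3); your remark about the fractional-$x$ form is a correct strengthening but is not needed for the lemma as stated, since the paper's definition of $\alpha$-approximate only quantifies over integral $x\in\S$.
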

%\begin{proofof}{Lemma~\ref{lem:poissonapprox}} 
\begin{proof}
Let $S_1,\ldots,S_n$ be an allocation, and let $x$ be an the integer point of $\eqref{lp:CArelaxation}$ corresponding to $S_1,\ldots,S_n$. Let $(S'_1,\ldots,S'_n) \sim \rp(x)$. It suffices to show that  $\ex[\sum_i v_i(S'_i)] \geq (1-1/e) \cdot \sum_i v_i(S_i)$. 

By definition of the Poisson rounding scheme, $S'_i$ includes each $j \in S_i$ independently with probability $1-1/e$. 
Submodularity implies that  $\ex[ v_i(S'_i)] \geq (1-1/e) \cdot  v_i(S_i)$ -- this was proved in many contexts: see for example \cite[Lemma 2.2]{FeigeMV07}, and the earlier related result in \cite[Proposition 2.3]{F06}. This completes the proof. 
\end{proof}
%\end{proofof} 

%\subsection{Warm-up: Convexity for Coverage\\ Valuations} %proc
\subsection{Warm-up: Convexity for Coverage Valuations} %full
\label{sec:coverageconvex}
In this section, we prove the special case of Lemma~\ref{lem:poissonconvex} for coverage valuations, as defined in Section~\ref{sec:MRS}. Fix $n$, $m$, and coverage valuations $\set{v_i}_{i=1}^n$, and let $\R$ denote the feasible set of mathematical program~\eqref{lp:CArelaxation}.  Let $(S_1,\ldots,S_n) \sim \rp(x)$ be the (random) allocation computed by the Poisson rounding scheme for point $x \in \R$. The expected welfare $\ex[w(\rp(x))]$ can be written as $\ex[\sum_{i=1}^n v_i(S_i)]$, where the expectation is taken over the internal random coins of the rounding scheme. By linearity of expectation, as well as the fact that the sum of concave functions is concave, it suffices to show that $\Ex[v_i(S_i)]$ is a concave function of $x$ for an arbitrary player $i$ with coverage valuation $v_i$. 

Fix player $i$, and use $x_{j}$, $v$, and $S$ as short-hand for $x_{ij}$, $v_i$, and $S_i$ respectively. Recall that $v$ is a coverage function; let $\L$ be a ground set and $A_1,\ldots,A_m \sse \L$ be such that $v_i(T) = |\union_{j \in T} A_j|$ for each $T \sse [m]$.  The Poisson rounding scheme includes each item $j$ in $S$ independently with probability $1-e^{-x_{j}}$. The expected value of player $i$ can be written as follows.
\begin{align*}
  \ex\left[ v(S) \right]  &= \ex[|\union_{j \in S} A_j|] \\
  &= \sum_{\ell \in \L} \pr [ \ell \in  \union_{j \in S} A_j] 
\end{align*}
Since the sum of concave functions is concave, it suffices to show that $\pr [ \ell \in  \union_{j \in S} A_j]$ is concave in $x$ for each $\ell \in \L$. We can interpret $\pr [ \ell \in  \union_{j \in S} A_j]$ as the probability that element $\ell$ is \emph{covered} by an item in $S$, where  $j \in [m]$ covers $\ell \in \L$ if $\ell \in A_j$. For each $\ell\in \L$, let $C_\ell$ be the set of items that cover $\ell$. Element $\ell \in \L$ is covered by $S$ precisely when $C_\ell \intersect S \neq \emptyset$. Each item $j \in C_\ell$ is included in $S$ independently with probability $1-e^{-x_{j}}$. Therefore, the probability $\ell \in \L$ is covered by $S$ can be re-written as follows: 
\begin{align}\label{prob:covered}
  \pr[\ell \in  \union_{j \in S} A_j] &= 1- \prod_{j \in C_\ell} e^{-x_{j}}\notag\\
 &=1-\exp\left(-\sum_{j \in C_\ell} x_{j}\right).
\end{align}
Form~\eqref{prob:covered} is the composition of the concave function $g(y)=1-e^{-y}$ with the affine function $y \to \sum_{j \in C_\ell} x_j$. It is well-known that composing a concave function with an affine function yields another concave function (see e.g. \cite{boyd}). Therefore, $\pr [ \ell \in  \union_{j \in S} A_j]$ is concave in $x$ for each $\ell \in \L$, as needed. This completes the proof.

\subsection{Convexity for Matroid Rank Sum Valuations}
\label{sec:proveconvex}
In this section, we will prove Lemma~\ref{lem:poissonconvex} in its full generality.  First, we define a discrete analogue of a Hessian matrix for set functions, and show that these discrete Hessians are negative semi-definite for matroid rank sum functions.
\begin{definition}\label{def:discrete-hess}
  Let $v:2^{[m]} \to \RR$ be a set function. For $S \sse [m]$, we define the \emph{discrete Hessian matrix} $\H^v_S \in \RR^{m \times m}$ of $v$ at $S$ as follows:
\begin{equation}
\label{eq:discretehess}
  \H^v_S(j,k) = v(S \union \set{j,k}) -v(S \union \set{j}) - v(S \union \set{k})  + v(S)
\end{equation}
for $j, k \in [m]$.
\end{definition}

\begin{claim}\label{claim:discrete-convex}
  If $v:2^{[m]} \to \RR^+$ is a matroid rank sum function, then $\H^v_S$ is negative semi-definite for each $S \sse [m]$.
\end{claim}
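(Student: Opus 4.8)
The plan is to reduce the claim about matroid rank sum functions to the single case of a matroid rank function, and then to establish negative semi-definiteness of the discrete Hessian of a matroid rank function directly from the submodularity (and, crucially, the \emph{exchange/structural}) properties of matroids. Since $\H^v_S$ is linear in $v$ in the obvious sense --- if $v = \sum_\ell w_\ell u_\ell$ with $w_\ell \ge 0$, then $\H^v_S = \sum_\ell w_\ell \H^{u_\ell}_S$ --- and since a non-negative combination of negative semi-definite matrices is negative semi-definite, it suffices to prove the claim when $v = u$ is a single matroid rank function. So the real content is: \emph{for a matroid rank function $u$ on ground set $[m]$ and any $S \subseteq [m]$, the matrix $\H^u_S$ is negative semi-definite.}

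For the matroid case, I would first record that the diagonal entries $\H^u_S(j,j) = u(S \cup \{j\}) - 2u(S \cup \{j\}) + u(S) = u(S) - u(S\cup\{j\}) \le 0$, so all diagonal entries are non-positive; this is consistent with what we want but far from sufficient. The natural approach is to show that $-\H^u_S$ has a convenient combinatorial form. Using the fact that matroid rank functions take values in $\{0,1\}$ increments, one checks that the marginal $u(S \cup \{j\}) - u(S) \in \{0,1\}$; partition the items into those with marginal $1$ (call them ``live'') and marginal $0$ (``dead''). For dead items $j$, submodularity forces the entire $j$-th row and column of $\H^u_S$ to be zero, so they can be discarded and we may assume every $j \notin S$ is live relative to $S$. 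For live items, one then analyzes the cross term $v(S\cup\{j,k\}) - v(S\cup\{j\}) - v(S\cup\{k\}) + v(S)$, which lies in $\{-1,0\}$: it is $-1$ exactly when $\{j,k\}$ is ``dependent over $S$'' in the sense that adding both to $S$ increases the rank by only one rather than two. The key combinatorial fact I would invoke is that this relation is essentially an equivalence relation on the live items (contract $S$ and restrict to the live elements; in the resulting matroid every element is a loop-free rank-one-on-singletons element, and ``$j,k$ dependent'' means they are parallel). Parallelism in a matroid is an equivalence relation, so the live items split into parallel classes $P_1, \dots, P_t$, and $-\H^u_S$ restricted to the live coordinates is exactly the block-diagonal matrix whose block on $P_a$ is the all-ones matrix $J_{|P_a|}$. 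Each all-ones block is positive semi-definite, hence $-\H^u_S$ is positive semi-definite and $\H^u_S$ is negative semi-definite.

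Carrying this out, the steps in order are: (1) reduce to a single matroid rank function by linearity; (2) show dead rows/columns vanish, reducing to the contracted-and-restricted matroid $M/S|_{\text{live}}$; (3) identify the remaining matrix as $-(\text{block matrix of all-ones blocks over parallel classes})$ by proving that the ``dependent pair over $S$'' relation is the parallelism equivalence relation in this matroid --- this uses submodularity for transitivity-type arguments and the rank axioms to pin down that a dependent pair over $S$ really does form a circuit of size two after contraction; (4) conclude via the trivial positive semi-definiteness of $J_k$ (it equals $\mathbf{1}\mathbf{1}^\top$, so $x^\top J_k x = (\sum x_i)^2 \ge 0$).

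I expect the main obstacle to be step (3): cleanly establishing that ``adding $j$ and $k$ to $S$ gains only one unit of rank'' is an \emph{equivalence} relation on the live elements, i.e.\ that it is transitive. Transitivity is exactly the statement that parallelism is an equivalence relation in a matroid, which is standard matroid theory, but translating it from the language of circuits back into the inclusion-exclusion formula $u(S\cup\{j,k,l\})$ and the pairwise cross terms requires a careful submodularity computation (for instance, showing that if $j,k$ are parallel and $k,l$ are parallel over $S$ then so are $j,l$, which amounts to a short chain of rank inequalities). An alternative route that sidesteps the explicit equivalence-relation bookkeeping is to exhibit $-\H^u_S$ directly as a Gram matrix: after contracting $S$ and passing to the live elements, assign to each live element $j$ the indicator vector $\chi_{[j]} \in \{0,1\}^t$ of its parallel class, and verify $-\H^u_S(j,k) = \langle \chi_{[j]}, \chi_{[k]}\rangle$; being a Gram matrix, it is automatically positive semi-definite. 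I would present whichever of these is shorter, most likely the Gram-matrix phrasing, with the parallel-class structure of a matroid rank function quoted from standard references (e.g.\ Appendix~\ref{app:matroids}).
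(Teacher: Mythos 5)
Your proposal is correct and follows essentially the same route as the paper: reduce by linearity to a single matroid rank function, compute that $-\H^v_S(j,k)$ is $1$ precisely when $\{j\}$ and $\{k\}$ are each independent in $M/S$ but $\{j,k\}$ is dependent there (i.e., $j$ and $k$ are parallel in $M/S$), invoke the matroid exchange axiom to get transitivity of this relation, and conclude that $-\H^v_S$ is a block-diagonal matrix of all-ones blocks, hence positive semi-definite. The only cosmetic difference is that you explicitly segregate ``dead'' items (those whose marginal over $S$ is zero) before forming the parallel classes, whereas the paper folds this into the relation itself — the condition that both singletons be independent in $M/S$ automatically zeroes out the rows and columns of dead items, so no separate reduction step is needed. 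Your Gram-matrix phrasing ($-\H^u_S(j,k)=\langle\chi_{[j]},\chi_{[k]}\rangle$) is a slick equivalent way to close, and the ``main obstacle'' you flag — transitivity of the parallelism relation — is exactly the lemma the paper proves by applying the exchange property to $\{k\}$ versus $\{j,\ell\}$.
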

\begin{proof}
 We observe that $\H^v_S$ is linear in $v$, and recall that a non-negative weighted-sum of negative semi-definite matrices is negative semi-definite. Therefore, it is sufficient to prove this claim when $v$ is a matroid rank function.

Let $v$ be the matroid rank function of some matroid $M$ with ground set $[m]$, and fix $S\sse [m]$. Observe that $v$ is monotone, submodular, integer-valued, and $v(T \union \set{j}) \leq v(T) + 1$ for all $T \sse [m]$ and $j \in [m]$. Therefore, a simple case analysis reveals that for each $j,k \in [m]$

\begin{equation*}
\H^v_S(j,k)= 
\begin{cases} -1 &  \text{if }  v(S \union \set{j}) = v(S \union \set{k}) =    v(S \union \set{j,k}) = v(S) + 1,\\
0 &\text{otherwise.}
\end{cases}
\end{equation*}
In other words, $-\H^v_S$ is a binary matrix where $-\H^v_S(j,k)=1$ if and only if two conditions are satisfied: (1) Both $\set{j}$ and $\set{k}$ are independent sets in the contracted matroid $M / S$, and (2) $\set{j,k}$ is  dependent  in $M / S$.

It is clear that $-\H^v_S$ is symmetric. We now also show that   $-\H^v_S$ encodes a \emph{transitive} relation on $[m]$ --- i.e. for all $j, k, \ell \in [m]$, if $-\H^v_S(j,k)=-\H^v_S(k,\ell)=1$ then  $-\H^v_S(j,\ell)=1$. Fix $j, k, \ell$ such that $-\H^v_S(j,k)=-\H^v_S(k,\ell)=1$.  The sets $\set{j}$, $\set{k}$, and $\set{\ell}$ are independent sets of the contracted matroid $M / S$, and moreover $\set{j,k}$ and $\set{k,\ell}$ are dependent in $M / S$.  Assume for a contradiction that $\set{j,\ell}$ is independent in $M/S$;  applying the matroid exchange property to $\set{k}$ and $\set{j,\ell}$ implies that one of $\set{j,k}$ and $\set{k,\ell}$ must be independent in $M/S$ as well, contradicting our choice of $j$, $k$, and $\ell$. Therefore, $\set{j,\ell}$ is dependent in $M/S$, and $-\H^v_S(j,\ell)=1$.

 A binary matrix encoding a symmetric and transitive relation is a block diagonal matrix where each diagonal block is an all-ones or all-zeros sub-matrix. It is known, and easy to prove, that such a matrix is positive semi-definite. Therefore $\H^v_S$ is negative semi-definite.
\end{proof}

 We now return to Lemma~\ref{lem:poissonconvex}. Fix $n$, $m$, and MRS valuations $\set{v_i}_{i=1}^n$, and let $\R$ denote the feasible set of mathematical program~\eqref{lp:CArelaxation}.  Let $(S_1,\ldots,S_n) \sim \rp(x)$ be the (random) allocation computed by the Poisson rounding scheme for point $x \in \R$. The expected welfare $\ex[w(\rp(x))]$ can be written as $\ex[\sum_{i=1}^n v_i(S_i)]$, where the expectation is taken over the internal random coins of the rounding scheme. By linearity of expectation, as well as the fact that the sum of concave functions is concave, it suffices to show that $\Ex[v_i(S_i)]$ is a concave function of $x$ for an arbitrary player $i$ with MRS valuation $v_i$. 

Fix player $i$, and use $x_{j}$, $v$, $S$ as short-hand for $x_{ij}$, $v_i$, $S_i$ respectively. The Poisson rounding scheme includes each item $j$ in $S$ independently with probability $1-e^{-x_{j}}$. We can now write the expected value of player $i$ as the following function $G_v: \RR^m \to \RR$:
\begin{align}\label{eq:G}
G_v(x_1,\ldots,x_m) %&= F_v(1-e^{-x_1},\ldots,1-e^{-x_m}) \notag \\
&= \sum_{S \sse [m]} v(S) \prod_{j \in S} (1-e^{-x_{j}}) \prod_{j \neq S} e^{-x_{j}}
\end{align}
%Viewed differently, $G_v(x)$ is simply the value of the lottery-value oracle of function $v$ evaluated at $(1-e^{-x_1},\ldots,1-e^{-x_m})$. 
The following claim, combined with Claim~\ref{claim:discrete-convex}, completes the proof of Lemma~\ref{lem:poissonconvex}.

\begin{claim}
  If all discrete Hessians of $v$ are negative semi-definite, then $G_v$ is concave.
\end{claim}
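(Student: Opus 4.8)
The plan is to show that $G_v$ is concave by computing its Hessian (the ordinary $m \times m$ matrix of second partial derivatives) and expressing it as a convex-combination-style average of the discrete Hessians $\H^v_S$, which by hypothesis are all negative semi-definite. Since a nonnegative combination (indeed, an integral/average over a probability distribution) of negative semi-definite matrices is negative semi-definite, this will give the result. Concretely, I would introduce the change of variables $z_j = 1 - e^{-x_j}$, so that $z_j \in [0,1]$ and $G_v$, viewed as a function of $z$, is exactly the multilinear extension $F_v(z) = \sum_{S} v(S) \prod_{j \in S} z_j \prod_{j \notin S}(1-z_j)$. Thus $G_v(x) = F_v(z(x))$ where each coordinate map $x_j \mapsto 1 - e^{-x_j}$ is concave and nondecreasing.

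First I would record the standard facts about the multilinear extension $F_v$: its first partials are $\partial F_v / \partial z_j = \Ex_{T \sim D_z}[v(T \cup \{j\}) - v(T \setminus \{j\})]$, and crucially its \emph{second} partials are $\partial^2 F_v/\partial z_j \partial z_k = \Ex_{T \sim D_z}[\H^v_T(j,k)]$ for $j \neq k$, while the diagonal second partials vanish (since $F_v$ is affine in each coordinate separately). So the Hessian of $F_v$ at $z$ is $\Ex_{T \sim D_z}[\H^v_T]$ with its diagonal zeroed out; but note that by definition $\H^v_T(j,j) = 0$, so in fact the Hessian of $F_v$ at $z$ equals $\Ex_{T \sim D_z}[\H^v_T]$ exactly. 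By Claim~\ref{claim:discrete-convex}'s hypothesis each $\H^v_T$ is negative semi-definite, hence so is this expectation, hence $F_v$ is concave on $[0,1]^m$. (A small care point: one should observe that the second partials of $F_v$ are continuous, so the Hessian characterization of concavity applies; this is immediate since $F_v$ is a polynomial.)

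Next I would transfer concavity through the change of variables. The map $\phi : \RR^m \to [0,1]^m$, $\phi(x)_j = 1 - e^{-x_j}$, is coordinatewise concave and nondecreasing. The general principle is: if $h : [0,1]^m \to \RR$ is concave and \emph{nondecreasing} in each coordinate, and $\phi_1,\ldots,\phi_m$ are each concave, then $x \mapsto h(\phi_1(x_1),\ldots,\phi_m(x_m))$ is concave. So I need $F_v$ to be nondecreasing in each $z_j$ — this holds because $v$ is monotone (an MRS function, being a nonnegative combination of matroid rank functions, is monotone), which makes $\partial F_v/\partial z_j = \Ex[v(T\cup\{j\}) - v(T\setminus\{j\})] \geq 0$. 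Combining, $G_v(x) = F_v(\phi(x))$ is concave on $\RR^m$, and restricting to the relevant domain finishes the claim and hence Lemma~\ref{lem:poissonconvex}.

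The main obstacle — or at least the step requiring the most care — is the second-derivative identity $\partial^2 F_v / \partial z_j \partial z_k = \Ex_{T \sim D_z}[\H^v_T(j,k)]$ and making sure the diagonal is handled correctly; this is a direct but slightly fiddly computation from the product form of $F_v$, using that differentiating $\prod_\ell (z_\ell \text{ or } 1-z_\ell)$ in $z_j$ replaces the $j$-factor by $\pm 1$ and effectively conditions/marginalizes appropriately. The composition-with-concave-monotone-map lemma is standard (it appears in Boyd--Vandenberghe, already cited as \cite{boyd}) but its hypotheses (monotonicity of the outer function, concavity of the inner functions) must both be explicitly verified, and the monotonicity of $F_v$ is exactly where MRS-monotonicity (rather than just the NSD-Hessian hypothesis) gets used. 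Alternatively, one can avoid invoking monotonicity by computing the Hessian of $G_v$ directly via the chain rule: $\nabla^2 G_v(x) = D\phi(x)^\top \nabla^2 F_v(\phi(x)) D\phi(x) + \mathrm{diag}\big(\phi_j''(x_j)\, \partial_j F_v(\phi(x))\big)$, where $D\phi(x)$ is the diagonal matrix with entries $e^{-x_j}$; the first term is NSD by Claim~\ref{claim:discrete-convex}, and the second term is a diagonal matrix with entries $-e^{-x_j}\,\partial_j F_v \leq 0$, again using monotonicity of $v$ — so either route ultimately leans on the same two ingredients.
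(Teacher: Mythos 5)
There is a genuine gap, and it sits in the first computation: you assert that ``by definition $\H^v_T(j,j) = 0$.'' That is false. Plugging $k=j$ into Definition~\ref{def:discrete-hess} gives
\[
\H^v_S(j,j) = v(S \cup \{j\}) - 2\,v(S \cup \{j\}) + v(S) = v(S) - v(S \cup \{j\}),
\]
which is nonpositive and typically strictly negative (for a matroid rank function it equals $-1$ whenever $j$ is not spanned by $S$). This error propagates: the Hessian of the multilinear extension $F_v$ genuinely has zero diagonal (each coordinate appears affinely), so it does \emph{not} equal $\Ex_{T\sim D_z}[\H^v_T]$, and $F_v$ is in fact \emph{not concave} for MRS valuations. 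The rank function of the uniform matroid $U_{1,2}$ on two elements is a clean counterexample: $F_v(z_1,z_2) = z_1 + z_2 - z_1 z_2$ has Hessian $\left(\begin{smallmatrix}0 & -1\\ -1 & 0\end{smallmatrix}\right)$ with eigenvalues $\pm 1$, while $\H^v_\emptyset = \left(\begin{smallmatrix}-1 & -1\\ -1 & -1\end{smallmatrix}\right)$ is NSD. So your first route fails at the very step where you would invoke concavity of $F_v$ to push it through the coordinatewise-concave map $\phi$. (There is also a smaller slip: $\partial^2 F_v/\partial z_j\partial z_k$ equals $\Ex_{T\sim D_z|_{[m]\setminus\{j,k\}}}[\H^v_T(j,k)]$, not $\Ex_{T\sim D_z}[\H^v_T(j,k)]$ — these differ by the factor $(1-z_j)(1-z_k)$ — though this only affects scaling, not sign.)

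Your ``alternative route'' via the chain rule decomposition $\nabla^2 G_v = D\phi\,\nabla^2 F_v(\phi)\,D\phi + \mathrm{diag}(\phi_j''\,\partial_j F_v)$ is the \emph{correct} algebraic identity, but it cannot be handled term-by-term: the first term is not NSD (same counterexample), and the sum of a non-NSD matrix and a nonpositive diagonal is not automatically NSD. What actually works — and what the paper does — is to combine the two pieces and recognize that the off-diagonal entries of $\nabla^2 G_v$ (coming entirely from $D\phi\,\nabla^2 F_v\,D\phi$, since $\nabla^2 F_v$ has zero diagonal) and the diagonal entries (coming entirely from $\phi_j''\,\partial_j F_v$) assemble into precisely
\[
\nabla^2 G_v(x) = \sum_{S\subseteq[m]} \prod_{\ell\in S}(1-e^{-x_\ell})\prod_{\ell\notin S}e^{-x_\ell}\;\H^v_S = \Ex_{S\sim D_{\phi(x)}}\!\left[\H^v_S\right],
\]
a nonnegative mixture of discrete Hessians. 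The moral is that the Poisson reparametrization $\phi_j(x_j)=1-e^{-x_j}$ is not an incidental change of variables: the $\phi_j''\,\partial_j F_v$ term is exactly the negative diagonal that $\nabla^2 F_v$ lacks and that $\H^v_S$ has, and it is this injection of diagonal negativity that makes the NSD argument go through. Your statement that ``either route ultimately leans on the same two ingredients'' is therefore wrong in spirit as well: the monotonicity of $v$ alone does not rescue the first term's failure to be NSD; the saving ingredient is the precise matching of the reparametrization to the discrete-Hessian structure, which the paper establishes by the direct computation of the mixed partials of $G_v$ and regrouping by $S$.
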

\begin{proof}
  Assume $\H^v_S$ is negative semi-definite for each $S \sse [m]$. We work with $G_v$ as expressed  in Equation~\eqref{eq:G}.  We will show that the Hessian matrix of $G_v$ at an arbitrary $x \in \RR^m$ is negative semi-definite, which is a sufficient condition for concavity. We take the mixed-derivative of $G_v$ with respect to $x_j$ and $x_k$ (possibly $j=k$).
\begin{align*}
  \frac{\partial^2 G_v(x)}{\partial x_j \partial x_k} =& \sum_{S \sse [m]\sm\set{j,k}}\ \prod_{\ell \in S} (1-e^{-x_\ell}) \prod_{\ell \in [m] \sm S} e^{-x_\ell} \bigg(v(S) - v(S \union \set{j}) - v(S \union \set{k}) + v(S \union \set{j,k})  \bigg)   \\
%%&= \sum_{S \sse [m] \sm \set{j,k}} \pr_{T \sim D_x} [T=S]    \left( v(S) - v(S \union \set{j}) - v(S \union \set{k}) + v(S \union \set{j,k})  \right)\\
=& \sum_{S \sse [m]}\   \prod_{\ell \in S} (1-e^{-x_\ell}) \prod_{\ell \in [m] \sm S} e^{-x_\ell}   \bigg( v(S) - v(S \union \set{j}) - v(S \union \set{k}) + v(S \union \set{j,k})  \bigg)\\
=&  \sum_{S \sse [m]}\  \prod_{\ell \in S} (1-e^{-x_\ell}) \prod_{\ell \in [m]\sm S} e^{-x_\ell}\  \H^v_S(j,k)
\end{align*}
The first equality follows by grouping  the terms of Equation~\eqref{eq:G} by the projection of $S$ onto $[m] \sm \set{i,j}$, and then differentiating. The second equality follows from the fact that $v(S) - v(S \union \set{j}) - v(S \union \set{k}) + v(S \union \set{j,k}) = 0$ when $S$ includes either of $j$ and $k$.  The last equality follows by definition of $\H^v_S$.

The above derivation immediately implies that we can write the Hessian matrix of $G_v(x)$ as a non-negative weighted  sum of discrete Hessian matrices.
\begin{equation}
  \label{eq:4}
  \grad^2 G_v(x) =  \sum_{S \sse [m]} \  \prod_{\ell \in S} (1-e^{-x_\ell}) \prod_{\ell \in [m] \sm S} e^{-x_\ell} \ \H^v_S
\end{equation}
  A non-negative weighted-sum of negative semi-definite matrices is negative semi-definite. This completes the proof of the claim.
\end{proof}

\section*{Acknowledgments} We thank Ittai Abraham, Moshe Babaioff, Bobby Kleinberg, and Jan Vondr{\'a}k for helpful discussions and comments.  Specifically, we thank Bobby Kleinberg for insights into the algebraic properties of set functions that were useful in guiding the early stages of this work, and we thank Jan Vondr{\'a}k for pointing out that the proof of Lemma~\ref{lem:poissonconvex} can be simplified in the special case of coverage functions (Section~\ref{sec:coverageconvex}). Finally, we thank the anonymous STOC referees for helpful suggestions.

{
\bibliography{convex}
\bibliographystyle{amsplain}
}

\newpage
\appendix

%\section{Explicit Coverage Valuations} %proc
\section{Combinatorial Auctions with Explicit Coverage Valuations} %full
\label{app:separation}

% %beginproc
% In this section, we apply our mechanism to explicitly represented coverage valuations. This demonstrates the utility of our mechanism in a concrete, non-oracle-based setting. In the full version of this paper, we also show that deterministic VCG-based mechanisms cannot approximate this problem by a factor better than $n$ in polynomial-time unless $NP \sse P/poly$, establishing the first separation between randomized and deterministic VCG-based mechanisms in the \emph{computational complexity} model.
% %endproc

%beginfull
In this section, we apply our mechanism to explicitly represented coverage valuations. This demonstrates the utility of our mechanism in a concrete, non-oracle-based setting, and moreover allows us to establish an interesting separation result. Specifically, we show that (1) The $(1-1/e)$-approximate  mechanism of Theorem~\ref{thm:CAmain} can be implemented in expected polynomial-time for this problem, and 
(2) No polynomial-time, universally-truthful, VCG-based\footnote{A universally-truthful mechanism is \emph{VCG-based} if it is a randomization over deterministic truthful mechanisms that each implement a \emph{maximal in range} allocation rule --- the special case of MIDR where each distribution in the distributional range is supported on a single allocation.} mechanism guarantees an approximation ratio of $o(n)$, unless $NP \sse P/poly$. The approximation ratio of $1-1/e$ is the best possible in polynomial-time for this problem --- even without incentive constraints --- assuming $P \neq NP$~\cite{KhotLMM08}. Ours is the first separation of its kind in the  \emph{computational complexity} model.\footnote{We note that this separation is meaningful because there are no known universally-truthful polynomial-time mechanisms --- VCG-based or otherwise --- for this problem that achieve an approximation ratio better than $\min(n,\sqrt{m})$. In particular, the result of \cite{D07} uses \emph{demand queries}, which can not be answered in polynomial time for explicit coverage valuations by the results of \cite{KhotLMM08} and \cite{FV06}.} 
%endfull

 An $n$ player, $m$ item instance combinatorial auctions with \emph{explicit coverage valuations}  is described as follows. For each player $i$, there is a finite set $\L^i$, and a family $A^i_1,\ldots,A^i_m$ of subsets of $\L^i$. The valuation function of player $i$ is then defined as $v_i(S) = | \union_{j \in S} A^i_j |$. The set system $\left(\L^i, \set{A^i_j}_{j=1}^m\right)$ is encoded explicitly as a bipartite graph.

\subsection{A Truthful-in-Expectation Mechanism} %full

As discussed previously, MRS valuations include all coverage valuations. Therefore, in order to implement the MIDR allocation rule of Section~\ref{sec:CA} for this problem, it suffices to  answer lottery-value queries in time polynomial in the number of bits encoding the instance.
\begin{claim}\label{claim:answer_succinct}
  In combinatorial auctions with explicit coverage valuations, lottery-value queries can be answered in time polynomial in the length of the encoding of the instance.
\end{claim}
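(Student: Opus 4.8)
\textbf{Proof plan for Claim~\ref{claim:answer_succinct}.}
The plan is to give a closed-form expression for the lottery-value $F_{v_i}(x^i)$ of an explicit coverage valuation that can be evaluated directly in polynomial time, bypassing any summation over subsets of $[m]$. Fix a player $i$ and write $\L=\L^i$, $A_j=A^i_j$, and let $x=(x_{i1},\ldots,x_{im}) \in [0,1]^m$ be the restriction of the query to player $i$'s variables. Recall that $F_{v_i}(x)=\expect[S \sim D_x]{v_i(S)}$, where $D_x$ includes each item $j$ in $S$ independently with probability $x_{ij}$. Since $v_i(S)=|\bigcup_{j \in S} A_j|$, by linearity of expectation we have
\begin{equation*}
F_{v_i}(x) = \expect[S \sim D_x]{\,\bigl|\bigcup_{j\in S} A_j\bigr|\,} = \sum_{\ell \in \L} \pr[\ell \in \bigcup_{j \in S} A_j].
\end{equation*}

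Next I would compute each term in this outer sum in closed form. For an element $\ell \in \L$, let $C_\ell = \set{j \in [m] : \ell \in A_j}$ be the set of items that cover $\ell$ (exactly as in Section~\ref{sec:coverageconvex}). Element $\ell$ is covered by $S$ iff $S \intersect C_\ell \neq \emptyset$, and since items are included in $S$ independently,
\begin{equation*}
\pr[\ell \in \bigcup_{j \in S} A_j] = 1 - \prod_{j \in C_\ell} (1 - x_{ij}).
\end{equation*}
Therefore $F_{v_i}(x) = \sum_{\ell \in \L}\bigl(1 - \prod_{j \in C_\ell}(1-x_{ij})\bigr)$. Each factor in this expression is read directly off the bipartite graph encoding $(\L^i, \set{A^i_j}_j)$: the sets $C_\ell$ are recovered by scanning the adjacency lists, and the total work is $O(|\L^i| + \sum_{\ell}|C_\ell|) = O(|\L^i| + \sum_j |A^i_j|)$ arithmetic operations, which is linear in the size of the encoding of player $i$'s valuation. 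Summing over all $n$ players (a lottery-value query for the welfare objective $w = \sum_i v_i$ decomposes additively) gives a total running time polynomial in the length of the encoding of the instance.

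I do not anticipate a serious obstacle here; the argument is essentially the inclusion-exclusion computation already carried out in Section~\ref{sec:coverageconvex}, specialized from the Poisson probabilities $1-e^{-x_{ij}}$ to the raw lottery probabilities $x_{ij}$. The only point requiring a word of care is to confirm that a lottery-value \emph{query} to the mechanism's convex program is what we are answering: in Algorithm~\ref{alg:midr} instantiated with $\rp$, the objective $f(x)=\ex_{y\sim\rp(x)}[w(y)] = \sum_i G_{v_i}(x)$ is evaluated, and $G_{v_i}$ has exactly the same closed form with $1-x_{ij}$ replaced by $e^{-x_{ij}}$ — so the same scan of the bipartite graph evaluates $f$ (and, by differentiating the product formula, its gradient and Hessian entries) in polynomial time, which is what the convex-programming machinery of Appendix~\ref{sec:solveCA} needs. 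I would state the claim for lottery-value queries as written and remark that the identical computation handles the $G_{v_i}$ needed by the mechanism.
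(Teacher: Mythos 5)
Your proposal is correct and follows essentially the same argument as the paper: decompose the expected union size via linearity of expectation into a sum over ground elements $\ell$ of $\pr[\ell$ is covered$]$, use independence of the item inclusions to get the product formula $1-\prod_{j:\ell\in A_j}(1-x_j)$, and observe this is computable directly from the bipartite encoding. The extra commentary about the $G_{v_i}$ substitution and gradients belongs logically to Claim~\ref{claim:solveCAapprox} rather than this claim, but it is accurate and does not affect the correctness of what you proved here.
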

\begin{proof}
Let $v:2^{[m]} \to \RRp$ be a coverage valuation presented explicitly as a set system $(\L, \set{A_j}_{j=1}^m)$, and let  $x\in [0,1]^m$.  Let $S$ be a random set that includes each $j \in [m]$ independently with probability $x_j$. The outcome of the lottery value oracle of $v$ evaluated at $x$ is equal to the sum, over all $\ell \in \L$, of the probability that $\ell$ is ``covered'' by  $S$ -- specifically, $\sum_{\ell \in \L} \pr[ \ell \in \union_{j \in S} A_j]$.  It is easy to verify that a term of this sum can be expressed as the following closed form expression. 
\[ \pr[\ell \in \union_{j \in S} A_j] =  1- \prod_{j : A_j \ni \ell} (1-x_j)\]
This expression can be evaluated in time polynomial in the representation of the set system. This completes the proof.
\end{proof}
\noindent Claim~\ref{claim:answer_succinct} implies the following Theorem.
\begin{theorem}
  There is an expected polynomial-time, $(1-1/e)$-approximate, truthful-in-expectation mechanism for combinatorial auctions with explicit coverage valuations.
\end{theorem}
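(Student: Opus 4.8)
The plan is to obtain this theorem as an immediate corollary of Theorem~\ref{thm:CAmain} together with Claim~\ref{claim:answer_succinct}. First I would observe that every explicit coverage valuation is a matroid rank sum valuation: as recalled in Section~\ref{sec:MRS}, for a coverage function $v(S) = |\bigcup_{j \in S} A_j|$ one has $v = \sum_{\ell \in \L} u_\ell$, where $u_\ell$ is the rank function of the rank-one matroid on $[m]$ whose independent sets are $\emptyset$ together with the singletons $\set{j}$ with $\ell \in A_j$ (the remaining elements of $[m]$ being loops). Hence an instance of combinatorial auctions with explicit coverage valuations is, in particular, an instance with MRS valuations, and Theorem~\ref{thm:CAmain} applies to it verbatim, furnishing a $(1-1/e)$-approximate, truthful-in-expectation mechanism that runs in expected $\poly(n,m)$ time in the lottery-value oracle model.

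It then remains only to remove the oracle, i.e.\ to implement each lottery-value query in time polynomial in the bit-length of the explicit encoding $(\L^i, \set{A^i_j}_{j=1}^m)$ of the instance. This is exactly the content of Claim~\ref{claim:answer_succinct}: a query $F_{v_i}(x) = \sum_{\ell \in \L^i}\bigl(1 - \prod_{j : \ell \in A^i_j}(1 - x_{ij})\bigr)$ admits a closed form evaluable in polynomial time, provided the coordinates of $x$ are rationals of polynomially bounded description length. Since the mechanism of Theorem~\ref{thm:CAmain} issues only $\poly(n,m)$ oracle queries and --- by inspection of its implementation via the ellipsoid method (Appendix~\ref{sec:solveCA}) --- issues them at points $x$ whose coordinates have bit-length bounded polynomially in the input size, replacing every oracle call by the computation of Claim~\ref{claim:answer_succinct} yields a mechanism that runs in expected time polynomial in the total encoding length of the instance while producing the very same (random) allocation and the same VCG-based payments (Lemma~\ref{lem:compute_payments}). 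The $(1-1/e)$-approximation guarantee and truthfulness-in-expectation depend only on this output distribution, so both are inherited unchanged, which proves the theorem.

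I do not expect a genuine obstacle here; the one place requiring care is the bit-complexity bookkeeping --- verifying that the points at which the mechanism of Theorem~\ref{thm:CAmain} probes the oracle have polynomially bounded description length, so that the ``polynomial in the encoding length'' bound of Claim~\ref{claim:answer_succinct} really does translate into ``polynomial in the input size'' for the overall mechanism. This is implicit in the expected-polynomial-time claim of Theorem~\ref{thm:CAmain}, but it is worth spelling out precisely here because, as discussed at the end of Section~\ref{sec:lotteryval}, exact truthfulness leaves no room for answering lottery-value queries approximately by sampling; the queries must be answered exactly, which is precisely what the closed form in Claim~\ref{claim:answer_succinct} makes possible.
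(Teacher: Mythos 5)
Your proposal is correct and follows the paper's own route exactly: observe that explicit coverage valuations are MRS (the paper asserts this in Section~2.4; you usefully spell out the decomposition into rank-one matroid rank functions), invoke Theorem~\ref{thm:CAmain}, and then appeal to Claim~\ref{claim:answer_succinct} to implement lottery-value queries exactly in polynomial time. The paper treats the bit-complexity bookkeeping as implicit in Theorem~\ref{thm:CAmain} (the ellipsoid-based implementation in Appendix~\ref{sec:solveCA} only ever produces query points of polynomially bounded bit-length), so your extra remark flags a real but already-handled point rather than a gap.
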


%beginfull
\subsection{A Lower-bound on Universally Truthful VCG-Based Mechanisms}

We use the following special case of \cite[Theorem 1.2]{BDFKMPSSU10}: If a succinct combinatorial auction problem satisfies the \emph{regularity} conditions on the valuations defined in \cite{BDFKMPSSU10}, and moreover the 2-player version of the problem is APX hard, then no polynomial-time, universally-truthful, VCG-based mechanism guarantees an approximation ratio of $o(n)$.  

It is routine to verify the regularity assumptions of \cite{BDFKMPSSU10} for explicit coverage valuations.
%Moreover, APX-hardness of the 2-player problem follows by a simple reduction from the APX-hard problem \emph{max-cut}, which we omit from this extended abstract. 
APX-hardness of the 2-player problem follows by an elementary reduction from the APX-hard problem \emph{max-cut}. Given an instance of max-cut on a graph $G=(V,E)$, we let $[m]=V$, $\L^1=\L^2=E$. For $e\in E$, $i \in \set{1,2}$, and $j \in V$, we let $e \in A^i_j$ if $j$ is one of the endpoints of edge $e$. It is easy to check that the welfare maximizing allocation of the resulting $2$-player instance of combinatorial auctions corresponds to the maximum cut of $G$. Moreover, using the fact that the optimal objective value of max-cut is at least $|E|/2$, it is elementary to verify that the reduction preserves hardness of approximation up to a constant factor. Therefore, combinatorial auctions with explicit coverage valuations and 2 players is APX hard.  
This yields the following Theorem.
\begin{theorem}
  No universally truthful, polynomial-time, VCG-based mechanism for combinatorial auctions with explicit coverage valuations achieves a  approximation ratio of $o(n)$, unless $NP \sse P/poly$.
\end{theorem}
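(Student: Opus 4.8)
The plan is to derive the lower bound from the general dichotomy of Buchfuhrer et al.~\cite[Theorem 1.2]{BDFKMPSSU10}: for any succinctly represented combinatorial auction problem whose class of admissible valuations satisfies their \emph{regularity} conditions, if the restriction to two players is APX-hard, then no polynomial-time, universally-truthful, VCG-based mechanism for the problem has approximation ratio $o(n)$ unless $NP \sse P/poly$. So the proof splits into two parts: (i) checking that combinatorial auctions with explicitly represented coverage valuations fit the regularity framework of~\cite{BDFKMPSSU10}, and (ii) proving APX-hardness of the two-player version.

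For part (i) I would go down the list of regularity axioms in~\cite{BDFKMPSSU10} --- essentially closure of the valuation class under the natural operations their reduction uses (restricting to a subset of the items, relabeling items, adding dummy items, passing to ``$0/1$-scaled'' versions) together with polynomial-time computability of these operations on the encoding. Each is immediate for explicit coverage valuations: restricting, relabeling, or padding the set system $(\L^i,\{A^i_j\}_j)$ again yields a coverage function and is trivially computable on the bipartite-graph encoding. For part (ii), the key step, I would reduce from \emph{max-cut}. Given $G=(V,E)$, take item set $[m]=V$ and give both players ground set $\L^1=\L^2=E$ with $e\in A^i_j$ iff $j$ is an endpoint of $e$. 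By monotonicity the optimal allocation may be assumed to partition $V$ into $(S_1,S_2)$, and then $v_1(S_1)+v_2(S_2)$ counts each non-cut edge once and each cut edge twice, i.e.\ equals $|E|+\mathrm{cut}(S_1,S_2)$. Hence the optimal welfare is $|E|+\mathrm{OPT}_{\mathrm{mc}}(G)$; since $\mathrm{OPT}_{\mathrm{mc}}(G)\ge|E|/2$, the additive gap $\mathrm{OPT}_{\mathrm{mc}}(G)-\mathrm{cut}(S_1,S_2)$ is preserved exactly while the two optima differ by at most a constant factor, so this is an L-reduction and the two-player problem is APX-hard.

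Combining (i) and (ii) with~\cite[Theorem 1.2]{BDFKMPSSU10} gives the theorem. I expect the only delicate point to be part (i): faithfully matching the precise regularity hypotheses of~\cite{BDFKMPSSU10} against the coverage-valuation class and confirming that the required operations stay inside the class and remain polynomial-time computable on the succinct encoding; part (ii) is a short, self-contained gadget argument with an elementary approximation-preservation check.
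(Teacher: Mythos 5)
Your proposal is correct and follows essentially the same route as the paper: invoke the dichotomy of Buchfuhrer et al.\ \cite[Theorem 1.2]{BDFKMPSSU10}, note that the regularity axioms hold for explicit coverage valuations, and establish APX-hardness of the $2$-player restriction via the same max-cut gadget (same item set $[m]=V$, same sets $A^i_j$ of incident edges, and the same $\mathrm{OPT}_{\mathrm{mc}}(G)\ge |E|/2$ observation to make the reduction approximation-preserving). The explicit welfare identity $v_1(S_1)+v_2(S_2)=|E|+\mathrm{cut}(S_1,S_2)$ you give is a slightly more detailed spelling-out of what the paper leaves as ``easy to check,'' but the argument is the same.
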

%endfull

\section{Solving The Convex Program}
\label{sec:solveCA}

In this section, we overcome some technical difficulties related to the solvability of convex programs. We show in Section~\ref{sec:solveCAapprox} that, in the lottery-value oracle model, the four conditions for ``solvability'' of convex programs, as stated in Fact~\ref{fact:convex_solvability}, are easily satisfied for convex program~\eqref{lp:CArelaxation}. However, an additional challenge remains: ``solving'' a convex program -- as in Definition~\ref{def:rsolvable} -- returns an approximately optimal solution. Indeed the optimal solution of a convex program may be irrational in general, so this is unavoidable.

% However, a more involved challenge remains: Algorithm~\ref{alg:midr}, as stated, requires an optimal solution $x^*$ to convex program~\eqref{lp:CArelaxation}. Unfortunately, most convex programs cannot be solved exactly -- indeed the entries of $x^*$ may be irrational! Using the ellipsoid method to ``solve'' a convex program invariably returns an approximately
% optimal solution, whereas a maximal-in-distributional-range mechanism
% is by definition \emph{exactly} optimal.

We show how to overcome this difficulty if we settle for polynomial runtime in expectation. While the optimal solution $x^*$ of \eqref{lp:CArelaxation} cannot be computed explicitly, the random variable $\rp(x^*)$ can be sampled in expected polynomial-time. The key idea is the following: \emph{sampling the random variable $\rp(x^*)$ rarely requires precise
 knowledge of $x^*$}.  Depending on the coin
 flips of $\rp$,  we decide how accurately we need to solve
 convex program~\eqref{lp:CArelaxation} in order compute $\rp(x^*)$.  Roughly speaking, we show that the probability of requiring a  $(1-\epsilon)$-approximation falls
exponentially in $\frac{1}{\epsilon}$. As a result, we can sample $\rp(x^*)$ in expected polynomial-time. We implement this plan in Section~\ref{sec:wellconditioned} under the simplifying assumption that convex program~\eqref{lp:CArelaxation} is \emph{well-conditioned} -- i.e. is ``sufficiently concave'' everywhere. In Section~\ref{sec:noise}, we show how to remove that assumption by slightly modifying our algorithm. 

%To simplify exposition, we address these issues in  Section~\ref{sec:wellconditioned} under a simplifying assumption that can be removed. In particular, we assume in that section that the objective $f(x)$ of convex program~\eqref{lp:CArelaxation} is \emph{well-conditioned} -- i.e. ``sufficiently concave'' everywhere. Under this assumption, we prove that Algorithm~\ref{alg:midr} can be simulated in expected polynomial-time when $r = \rp$ (Lemma~\ref{lem:solveCAconditioned}). In Section~\ref{sec:noise}, we show how to remove that assumption by slightly modifying our algorithm. 
%at the cost of $o(1)$ in the approximation ratio. 

\subsection{Approximating the Convex Program}
\label{sec:solveCAapprox}

\begin{claim}
\label{claim:solveCAapprox}
  There is an algorithm for Combinatorial Auctions with MRS valuations in the lottery-value oracle model that takes as input an instance of the problem and an approximation parameter $\epsilon >0$, runs in $\poly(n,m,\log(1/\epsilon))$ time, and returns a $(1-\epsilon)$-approximate solution to convex program $\eqref{lp:CArelaxation}$.
\end{claim}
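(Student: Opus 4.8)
The plan is to verify that the four standard conditions for "solvability" of a convex program via the ellipsoid method apply to convex program~\eqref{lp:CArelaxation}, and then invoke the standard machinery to obtain a $(1-\epsilon)$-approximate solution in $\poly(n,m,\log(1/\epsilon))$ time. Concretely, I would recall that the ellipsoid-based convex programming machinery requires: (i) a separation oracle for the feasible region $\R$; (ii) the ability to evaluate the objective $f(x) = \ex_{y\sim\rp(x)}[w(y)]$ (and ideally a supergradient) at any queried point; (iii) a priori bounds on the location and ``size'' of the feasible region (it is contained in $[0,1]^{n\times m}$, hence bounded, and contains a ball of known radius around an interior point); and (iv) boundedness of the objective and its gradients over $\R$.

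First I would dispatch the feasibility side: $\R$ is the explicitly given polytope $\{x : \sum_i x_{ij}\le 1 \ \forall j,\ 0\le x_{ij}\le 1\}$, so a separation oracle is trivial (just check each of the $O(nm)$ inequalities), $\R \sse [0,1]^{n\times m}$ gives the outer bound, and $\R$ contains, say, the ball of radius $\tfrac{1}{2nm}$ around the point with all coordinates $\tfrac{1}{2n}$, giving the inner bound. Next I would handle the objective: by Lemma~\ref{lem:poissonconvex}, $f$ is concave on $\R$, and by construction $f(x) = \sum_i G_{v_i}(x_i)$ where each $G_{v_i}$ has the closed form~\eqref{eq:G}. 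The key observation is that $G_v(x_1,\ldots,x_m)$ equals $F_v(1-e^{-x_1},\ldots,1-e^{-x_m})$, i.e. the multilinear extension evaluated at the transformed point $(1-e^{-x_{ij}})_j$; since $1-e^{-x_{ij}}\in[0,1]$ and $\sum_i (1-e^{-x_{ij}}) \le \sum_i x_{ij} \le 1$, this transformed point lies in a valid lottery, so a single lottery-value oracle query to $v_i$ returns $G_{v_i}(x_i)$ exactly, and summing over $i$ gives $f(x)$ in $\poly(n,m)$ oracle calls. A supergradient of $f$ can be obtained either analytically from~\eqref{eq:4}-style mixed-derivative expressions or, more simply, by finite differencing / additional lottery-value queries; I would note that for the ellipsoid method a (sub/super)gradient is obtainable this way, or one may use the variant that needs only function evaluations. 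Finally, boundedness: $w(y) = \sum_i v_i(S_i) \le \sum_i v_i([m])$, so $0 \le f(x) \le \sum_i v_i([m]) =: V$, with $V$ polynomially bounded in the input size (it is a sum of $n$ oracle values), and the gradient entries are bounded by $O(V)$ as well since each $\partial G_v/\partial x_j$ is a difference of expectations each bounded by $v([m])$.

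Having checked conditions (i)--(iv), I would then cite the standard fact (which the paper labels Fact~\ref{fact:convex_solvability} / Definition~\ref{def:rsolvable}) that under these conditions the ellipsoid method, run for $\poly(nm, \log(V/\epsilon)) = \poly(n,m,\log(1/\epsilon))$ iterations (each iteration costing one separation-oracle call plus one objective/supergradient evaluation, i.e. $\poly(n,m)$ lottery-value queries), returns a point $\hat x \in \R$ with $f(\hat x) \ge (1-\epsilon)\max_{x\in\R} f(x)$. This is precisely a $(1-\epsilon)$-approximate solution to~\eqref{lp:CArelaxation}, completing the proof.

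The main obstacle I anticipate is not any single step but the bookkeeping of condition~(iv) together with the reduction to exactly evaluating the multilinear extension: one must argue carefully that $G_v(x) = F_v(1-e^{-x_1},\ldots,1-e^{-x_m})$ with the transformed point being a \emph{legal} input to the lottery-value oracle (the inequality $\sum_i(1-e^{-x_{ij}}) \le 1$ being the crucial point, already noted in the definition of $\rp$), so that we get the objective value \emph{exactly} rather than up to sampling error — this is what lets the final mechanism be exactly truthful rather than merely $\epsilon$-truthful. The concavity needed to make the ellipsoid guarantee meaningful is already supplied by Lemma~\ref{lem:poissonconvex}, so that is not an obstacle here.
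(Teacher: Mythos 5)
Your proposal follows essentially the same route as the paper: verify the four conditions of Fact~\ref{fact:convex_solvability}, note that the first three (dimension, starting ellipsoid, separation oracle) are routine for the assignment polytope, and reduce the first-order oracle to lottery-value queries via the identity $G_v(x_1,\ldots,x_m) = F_v(1-e^{-x_1},\ldots,1-e^{-x_m})$. That part is exactly right and is the crux of the claim.

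Two small points worth flagging. First, where you wave at ``finite differencing / additional lottery-value queries'' (or a function-evaluation-only ellipsoid variant) for the gradient, the paper instead derives a \emph{closed form} for each partial derivative,
\[
\frac{\partial f}{\partial x_{ij}}(x) = e^{-x_{ij}}\Bigl(F_{v_i}\bigl((1-e^{-x_{i1}},\ldots,1-e^{-x_{im}}) \vee 1_j\bigr) - F_{v_i}\bigl((1-e^{-x_{i1}},\ldots,1-e^{-x_{im}}) \wedge 0_j\bigr)\Bigr),
\]
which is computable \emph{exactly} with two lottery-value queries; this sidesteps any need to argue that finite-differencing error is benign for the ellipsoid guarantee, and it also matches the literal requirement of the first-order oracle in Fact~\ref{fact:convex_solvability} (which asks for $\grad c(x)$, not merely function values). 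If you want to take the finite-differencing route you would owe an error analysis; the explicit formula is cleaner. Second, a minor misconception: you stress that $\sum_i(1-e^{-x_{ij}})\le 1$ is ``crucial'' for the transformed point to be a legal input to the lottery-value oracle. In fact the lottery-value oracle is queried \emph{per player} with a vector in $[0,1]^m$, and $1-e^{-x_{ij}}\in[0,1]$ holds coordinate-wise whenever $x_{ij}\ge 0$; the inequality $\sum_i(1-e^{-x_{ij}})\le 1$ is what makes the Poisson rounding a valid (sub)probability distribution over players for item $j$, not what makes the oracle query legal. Neither point breaks your argument, but the gradient handling is the one place where your sketch is genuinely looser than the paper's proof.
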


It suffices to show that the four conditions of Fact~\ref{fact:convex_solvability} are satisfied in our setting. The first three are immediate from elementary combinatorial optimization (see for example \cite{schrijver}).
%: the feasible set $\R$ of convex program~\eqref{lp:CArelaxation} is a matroid polytope corresponding to a matroid on a ground set of size  $mn$, and therefore: (1) Has dimension polynomial in the parameters $m$ and $n$ of the problem. (2) We can easily compute a starting ellipsoid and lower-bound on the volume of $\R$ as required. (3) There is a separation oracle for $\R$. For (2) and (3), see for example \cite{schrijver}.  
It remains to show that the first-order oracle, as defined in Fact~\ref{fact:convex_solvability}, can be implemented in polynomial-time in the lottery-value oracle model. The objective $f(x)$ of convex program~\eqref{lp:CArelaxation} can, by definition, be written as \[f(x)=\sum_i G_{v_i}(x_i),\]
where  $v_i$ is the valuation function of player $i$,  $x_i$ is the vector $(x_{i1},\ldots,x_{im})$, and and $G_{v_i}$ is as defined in \eqref{eq:G}. By definition,  $G_{v_i}(x_i)$ is the outcome of querying the lottery-value oracle of player $i$ with $(1-e^{-x_{i1}},\ldots,1-e^{-{x_{im}}})$ . Therefore, we can evaluate $f(x)$ using  $n$ lottery-value query, one for each player. It remains to show that we can also evaluate the (multi-variate) derivative $\grad f(x)$ of  $f(x)$. Using definition~\eqref{eq:G}, we take the partial derivative corresponding to $x_{ij}$. By rearranging the sum appropriately, we get that
\begin{align*}
\frac{\partial f}{\partial x_{ij}}(x) = e^{-x_{ij}}\bigg(F_{v_i}\big( (1-e^{-x_{i1}},\ldots,1-e^{-x_{im}}) \vee 1_j\big) - F_{v_i}\big((1-e^{-x_{i1}},\ldots,1-e^{-x_{im}})  \wedge 0_j\big)\bigg),
\end{align*}
where $F_{v_i}$ is as defined in Equation~\eqref{eq:F}. Here,  $\vee$ and $\wedge$ denote entry-wise minimum and maximum respectively, $1_j$ denotes the vector with all entries equal to $0$ except for a $1$ at position $j$, and $0_j$ denotes the vector with all entries equal to $1$ except for a $0$ at position $j$. 
%Intuitively, the above expression gives marginal rate of contribution to the welfare of increasing $x_{ij}$. 
It is clear that this entry of the gradient of $f$ can be evaluated using two lottery-value queries. Therefore,  $\grad f(x)$ can be evaluated using $2n$ lottery-value queries, $2$ for each player. This completes the proof of Claim~\ref{claim:solveCAapprox}.

\subsection{The Well-Conditioned Case}
\label{sec:wellconditioned}

In this section, we make the following simplifying assumption:  The objective function $f(x)$ of convex program~\eqref{lp:CArelaxation}, when restricted to any line in the feasible set $\R$, has a  second derivative of magnitude at least $\lambda=\frac{\sum_{i=1}^n v_i([m])}{ 2^{\poly(n,m)}}$ everywhere, where the polynomial in the denominator may be arbitrary. This is equivalent to requiring that every eigenvalue of the Hessian matrix of $f(x)$  has magnitude at least $\lambda$ when evaluated at any point in $\R$. Under this assumption, we prove Lemma~\ref{lem:solveCAconditioned}.
\begin{lemma}\label{lem:solveCAconditioned}
  Assume the magnitude of the second derivative of $f(x)$ is at least $ \lambda=\frac{\sum_{i=1}^n v_i([m])}{ 2^{\poly(n,m)}}$ everywhere. Algorithm~\ref{alg:midr}, instantiated for combinatorial auctions with $r=\rp$, can be simulated in time polynomial in $n$ and $m$ in expectation.
\end{lemma}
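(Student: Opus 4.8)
\textbf{Proof proposal for Lemma~\ref{lem:solveCAconditioned}.}

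The plan is to sample $z \sim \rp(x^*)$ without ever computing $x^*$ exactly, by exploiting the structure of Algorithm~\ref{alg:round}: for each item $j$ we draw $p_j \in [0,1]$ uniformly, and the allocation of $j$ depends only on which of the ``bucket boundaries'' $\sum_{i \le k}(1-e^{-x^*_{ij}})$ the value $p_j$ falls between. So the only quantities we need to resolve are comparisons between $p_j$ and sums of the form $\sum_{i \le k}(1-e^{-x^*_{ij}})$. If $p_j$ is bounded away from all such boundary values by a margin $\delta$, then a $\delta$-accurate approximation $\tilde x$ to $x^*$ suffices to determine $a(j)$ correctly. First I would invoke Claim~\ref{claim:solveCAapprox} to compute, in time $\poly(n,m,\log(1/\epsilon))$, a point $\tilde x = \tilde x(\epsilon)$ with $f(\tilde x) \ge (1-\epsilon) f(x^*)$. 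The well-conditioning hypothesis --- that every eigenvalue of $\grad^2 f$ has magnitude at least $\lambda$ --- is exactly what converts this objective-value guarantee into a \emph{domain} guarantee: by strong concavity, $f(x^*) - f(\tilde x) \ge \frac{\lambda}{2}\|x^* - \tilde x\|^2$, so $\|x^* - \tilde x\| \le \sqrt{2\epsilon f(x^*)/\lambda}$. Since $f(x^*) \le \sum_i v_i([m])$ and $\lambda = \sum_i v_i([m]) / 2^{\poly(n,m)}$, we get $\|x^* - \tilde x\| \le 2^{\poly(n,m)} \sqrt{\epsilon}$, i.e.\ running the approximation algorithm with $\epsilon = \delta^2 / 2^{\poly(n,m)}$ yields a point within $\ell_\infty$-distance $\delta$ of $x^*$, at cost $\poly(n, m, \log(1/\delta))$.

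The sampling procedure is then: draw all $p_j$; start with a coarse tolerance $\delta_0$ (say $\delta_0 = 1/2$); compute $\tilde x$ to accuracy $\delta_0$ as above; for each item $j$, check whether $p_j$ is within $n\delta_0$ of any of the $n+1$ partial-sum boundaries determined by $\tilde x$ (the factor $n$ accounts for accumulation over the sum and the Lipschitz constant of $1 - e^{-t}$, which is at most $1$); if $p_j$ is safely in the interior of a bucket for \emph{every} $j$, output the resulting allocation; otherwise halve $\delta$ and repeat. This halts with probability $1$ and always outputs a sample distributed exactly as $\rp(x^*)$, because conditioned on the $p_j$'s the output is a deterministic function of $x^*$ that we have verified against a sufficiently-accurate surrogate.

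For the running-time bound in expectation: condition on the $p_j$'s. At tolerance level $\delta$ we succeed unless some $p_j$ lies in one of $O(n)$ intervals each of length $O(n\delta)$, an event of probability $O(n^2 \delta)$ by a union bound over items and boundaries (using that each $p_j$ is uniform on $[0,1]$ and independent). So the probability that we are still refining at level $\delta_t = 2^{-t}$ is $O(n^2 2^{-t})$, and the work done at level $t$ is $\poly(n,m,\log(1/\delta_t)) = \poly(n,m) \cdot t$. Hence the expected total work is $\sum_{t \ge 0} O(n^2 2^{-t}) \cdot \poly(n,m)\cdot t = \poly(n,m)$, since $\sum_t t\,2^{-t}$ converges. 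Finally, the payments must also be handled, but by Lemma~\ref{lem:compute_payments} (and the fact that VCG payments only require solving auxiliary instances of the same convex program, each with a player removed) the same device applies; I would remark that the payment computation inherits the same expected-polynomial-time guarantee.

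\textbf{Main obstacle.} The delicate point is the quantitative chain converting ``$\epsilon$-optimal in objective value'' to ``$\delta$-close in the domain'' and then to ``correct allocation for item $j$'': one must track that $f(x^*) \le \sum_i v_i([m])$ so the ratio $f(x^*)/\lambda$ is genuinely $2^{\poly(n,m)}$ (not something that blows up), verify that $1 - e^{-t}$ is $1$-Lipschitz so $\ell_\infty$-closeness of $\tilde x$ to $x^*$ transfers to closeness of the bucket boundaries with only a factor-$n$ loss, and confirm that the per-level cost $\log(1/\epsilon) = \poly(n,m) + O(t)$ grows only linearly in $t$ so the series $\sum_t t\, 2^{-t}$ (rather than something divergent) controls the expectation. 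None of these steps is hard individually, but the bookkeeping --- especially making sure the arbitrary polynomial in the exponent of $\lambda$ is absorbed cleanly and that the well-conditioning assumption is used in exactly the right place --- is where care is required.
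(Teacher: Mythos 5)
Your proof is correct and follows the paper's approach essentially verbatim: both use strong concavity to convert the ellipsoid method's objective-value guarantee into a domain-distance guarantee via $f(x^*)-f(\tilde x)\ge\frac{\lambda}{2}\|x^*-\tilde x\|^2$, set $\epsilon = \delta^2/2^{\poly(n,m)}$ so that a $(1-\epsilon)$-approximate solution is a $\delta$-estimate, and then iteratively halve $\delta$ while checking whether each $p_j$ avoids an $O(n\delta)$-wide uncertainty zone around the bucket boundaries, observing that the refinement probability decays geometrically while per-iteration cost grows only polynomially in the level. The only cosmetic differences are the initial tolerance ($\delta_0=1/2$ versus the paper's $\delta_0=1/(2n^2)$) and whether the refinement loop is run jointly over all items (as you do) or item-by-item (as the paper does); both are sound, though your statement ``$\poly(n,m,\log(1/\delta_t))=\poly(n,m)\cdot t$'' should be read as ``polynomial in $t$ of bounded degree'' (so the series $\sum_t t^d 2^{-t}$ still converges), a point you do flag in your final remark.
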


%We observe that, in order to simulate Algorithm~\eqref{alg:midr} when using the poisson rounding procedure of Algorithm~\ref{alg:round}, we need not compute the optimal solution $x^*$ of convex program~\eqref{lp:CA} explicitly. At the end of the day, it suffices to sample from $\rp(x^*)$. We notice that this can be done in expected polynoimal time. The reason is thus: For most draws of the internal random coins $\set{p_j}_j$ of rounding scheme~\eqref{alg:round}, computing the allocation requires only coarse knowledge of $x^*$. Therefore, using the ellipsoid algorithm to compute an approximation $\tilde{x}$ to $x^*$ suffices in these cases. On the rare occasions where coarse estimates of $x^*$ do not suffice, we can spend extra time computing better estimates until we have the desired precision. We now make this more precise.

Let $x^*$ be the optimal solution to convex program~\eqref{lp:CArelaxation}. Algorithm~\ref{alg:midr} allocates items according to the distribution $\rp(x^*)$. The Poisson rounding scheme, as described in Algorithm~\ref{alg:round}, requires making $m$ independent decisions, one for each item $j$. Therefore, we fix item $j$ and show how to simulate this decision. It suffices to do the following in expected polynomial-time: flip uniform coin $p_j \in [0,1]$, and find the minimum index $a(j)$ (if any) such that $\sum_{i \leq a(j)} (1-e^{-x^*{ij}}) \geq p_j$. For most realizations of $p_j$, this can be decided using only coarse estimates $\tilde{x}_{ij}$ to $x^*_{ij}$. Assume we have an \emph{estimation oracle} for $x^*$ that, on input $\delta$, returns a \emph{$\delta$-estimate} $\tilde{x}$ of $x^*$: Specifically, $\tilde{x}_{ij} - x^*_{ij} \leq \delta$ for each $i$.  When $p_j$ falls outside the ``uncertainty zones'' of $\tilde{x}$, such as when $|p_j-\sum_{i' \leq i}(1-e^{-\tilde{x}_{i'j}})| > \delta n$ for each $i\in[n]$, it is easy to see that we can correctly determine $a(j)$ by using $\tilde{x}$ in lieu of $x$. The total measure of the uncertainty zones of $\tilde{x}$ is at most  $2n^2\delta$,  therefore $p_j$ lands outside the uncertainty zones with probability at least $1-2n^2\delta$. The following claim shows that if the estimation oracle for $x^*$ can be implemented in time polynomial in $\log(1/\delta)$, then we can simulate the Poisson rounding procedure in expected polynomial-time.
\begin{claim}
  Let $x^*$ be the optimal solution of convex program~\eqref{lp:CArelaxation}. Assume access to a subroutine $B(\delta)$ that returns a $\delta$-estimate of $x^*$ in time $\poly(n,m,\log(1/\delta))$. Algorithm~\eqref{alg:midr} with $r=\rp$ can be simulated in expected $\poly(n,m)$ time.
\end{claim}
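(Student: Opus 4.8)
The plan is to implement the Poisson rounding scheme for the optimal point $x^*$ by a lazy, adaptive refinement strategy: rather than computing $x^*$ to a fixed precision up front, we flip the coins $p_1,\ldots,p_m$ first, and then repeatedly sharpen our estimate of $x^*$ only as needed to resolve the decision $a(j)$ for each item. First I would fix an item $j$ and note that, given a $\delta$-estimate $\tilde x$, the decision $a(j)$ is determined correctly whenever $p_j$ avoids the union of the ``uncertainty zones'' $\bigl(\sum_{i'\le i}(1-e^{-\tilde x_{i'j}})-\delta n,\ \sum_{i'\le i}(1-e^{-\tilde x_{i'j}})+\delta n\bigr)$ over $i\in[n]$, a set of total measure at most $2n^2\delta$. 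So I would run a loop over $t=1,2,\ldots$: at stage $t$, call $B(\delta_t)$ with $\delta_t = 2^{-t}/(4n^2)$ to obtain a $\delta_t$-estimate $\tilde x^{(t)}$, and check whether $p_j$ lies outside all the uncertainty zones of $\tilde x^{(t)}$; if so, output the corresponding $a(j)$ and halt, otherwise proceed to stage $t+1$.

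The key step is to bound the expected number of stages. Since $p_j$ is uniform on $[0,1]$ and independent of the estimates (the estimates depend only on the valuations, not on $p_j$), the probability that we reach stage $t+1$ is at most the measure of the uncertainty zones of $\tilde x^{(t)}$, which is at most $2n^2\delta_t = 2^{-(t+1)}$. Hence the probability of reaching stage $t$ is at most $2^{-t}$, and the expected value of $\sum_t \poly(n,m,\log(1/\delta_t)) = \sum_t \poly(n,m,t)$ under this geometric tail is $\poly(n,m)$; since each call to $B(\delta_t)$ runs in $\poly(n,m,\log(1/\delta_t)) = \poly(n,m,t)$ time, the total expected running time for resolving item $j$ is $\poly(n,m)$. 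Summing over the $m$ items (each handled independently, reusing or recomputing estimates as convenient) gives expected $\poly(n,m)$ time overall, which is exactly the claimed bound for simulating Algorithm~\ref{alg:midr} with $r=\rp$. One subtlety to address: we must make sure a correct halting stage always exists, i.e. that for every fixed realization of $p_j$ there is some $\delta$ small enough that $p_j$ escapes all uncertainty zones; this holds unless $p_j$ exactly equals one of the true partial sums $\sum_{i'\le i}(1-e^{-x^*_{i'j}})$, a measure-zero event, so it does not affect the expectation (and can be handled by an arbitrary tie-breaking convention).

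The main obstacle I anticipate is not the probabilistic bookkeeping — which is a clean geometric-series argument — but justifying the existence of the estimation oracle $B(\delta)$ running in $\poly(n,m,\log(1/\delta))$ time, i.e. discharging the hypothesis of the claim. This is where the well-conditioning assumption ($|f''|\ge \lambda = \frac{\sum_i v_i([m])}{2^{\poly(n,m)}}$ along every line) does the work: strong concavity means that an additive $\eta$-approximation in objective value translates into an $O(\sqrt{\eta/\lambda})$-approximation in the variable $x$ (the optimizer is ``isolated''), so to get a $\delta$-estimate of $x^*$ it suffices to solve the convex program to additive accuracy $\eta = \Theta(\lambda\delta^2)$. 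By Claim~\ref{claim:solveCAapprox} we can compute a $(1-\epsilon)$-multiplicative approximation — equivalently an additive $\epsilon\cdot\sum_i v_i([m])$ approximation since $f(x)\le\sum_i v_i([m])$ — in $\poly(n,m,\log(1/\epsilon))$ time, and choosing $\epsilon = \Theta(\lambda\delta^2/\sum_i v_i([m])) = \delta^2/2^{\poly(n,m)}$ gives $\log(1/\epsilon) = \poly(n,m) + 2\log(1/\delta)$, so $B(\delta)$ runs in $\poly(n,m,\log(1/\delta))$ time as required. Writing this carefully requires quantifying the ``strong concavity implies unique well-separated optimum'' step; everything downstream of it is the routine geometric-tail argument sketched above.
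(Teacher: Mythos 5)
Your proof is correct and matches the paper's own argument essentially step for step: flip $p_j$ up front, then lazily refine $\delta$ (halving at each stage), bound the probability of reaching stage $t$ by the measure $2n^2\delta_t$ of the uncertainty zones (using independence of $p_j$ from the estimates), and sum the geometric tail against the $\poly(n,m,t)$ per-stage cost. Your last paragraph about constructing $B(\delta)$ from Claim~\ref{claim:solveCAapprox} and well-conditioning is not needed to prove this claim (it is a hypothesis here, discharged separately in the paper), but you correctly flag it as out of scope, and your observation about the measure-zero non-termination event is a sound point that the paper leaves implicit.
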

\begin{proof}
  It suffices to show that we can simulate the allocation of an item $j$ by Algorithm~\eqref{alg:round} on input $x^*$. The simulation proceeds as follows: Draw $p_j \in [0,1]$ uniformly at random.  Start with $\delta=\delta_0=\frac{1}{2n^2}$. Let $\tilde{x}=B(\delta)$. While  $|p_j-\sum_{i' \leq i}(1-e^{-\tilde{x}_{i'j}})| \leq \delta n$ for some $i\in [n]$ (i.e. $p_j$ may fall inside an ``uncertainty zone'') do the following: let $\delta=\delta/2$, $\tilde{x}=B(\delta)$ and repeat. After the loop terminates, we have a sufficiently accurate estimate  of $x^*$ to calculate $a(j)$ as in Algorithm~\eqref{alg:round}.

It is easy to see that the above procedure is a faithful simulation of Algorithm~\eqref{alg:round} on $x^*$. It remains to bound its expected running time. Let $\delta_k=\frac{1}{2^{k+1}n^2}$ denote the value of $\delta$ at the $k$th iteration.  By assumption, the $k$th iteration takes $\poly(n,m,\log(1/\delta_k)) = \poly(n,m,\log(2^{k+1} n^2)) = \poly(n,m,k)$ time. The probability this procedure does not terminate after $k$ iterations is at most $2 n^2 \delta_k = 1/2^k$. Taken together, these two facts and a simple geometric summation imply that the expected runtime is polynomial in $n$ and $m$.
\end{proof}

It remains to show that the estimation oracle $B(\delta)$ can be implemented in $\poly(n,m,\log(1/\delta))$ time. At first blush, one may expect that the ellipsoid method can be used in the usual manner here. However, there is one complication: we require an estimate $\tilde{x}$ that is close to $x^*$ \emph{in solution space} rather than in terms of objective value. Using our assumption on the curvature of $f(x)$, we will reduce finding a $\delta$-estimate of $x^*$ to finding an $1-\epsilon(\delta)$ approximate solution to convex program~\eqref{lp:CArelaxation}. The dependence of $\epsilon$ on $\delta$ will be such that $\epsilon \geq \poly(\delta) / 2^{\poly(n,m)}$, thereby we can invoke Claim~\ref{claim:solveCAapprox} to deduce that $B(\delta)$ can be implemented in $\poly(n,m,\log(1/\delta))$ time.

Let $\epsilon=\epsilon(\delta) = \frac{\delta^2 \lambda}{2 \sum_i v_i([m]) }$. Plugging in the definition of $\lambda$, we deduce that $\epsilon \geq \delta^2 / 2^{\poly(n,m)}$, which is the desired dependence.  It remains to show that if $\tilde{x}$ is $(1-\epsilon)$-approximate solution to \eqref{lp:CArelaxation}, then $\tilde{x}$ is also a $\delta$-estimate of $x^*$.

 Using the fact that $f(x)$ is concave, and moreover its second derivative has magnitude at least $\lambda$, it a simple exercise to bound distance of any point $x$ from the optimal point $x^*$ in terms of its sub-optimality $f(x^*) - f(x)$, as follows: 
%Drawing a ray from $x^*$ to $x$, the value of the objective $f$ is strictly decreasing as we move from $x^*$ to $x$, and its curvature is at least $\lambda$. It is now a simple exercise to deduce that 
\begin{equation}
  \label{eq:distance_suboptimality}
f(x^*) - f(x)  \geq \frac{\lambda}{2} ||x - x^*||^2. 
\end{equation}
Assume $\tilde{x}$ is a $(1-\epsilon)$-approximate solution to \eqref{lp:CArelaxation}. Equation~\eqref{eq:distance_suboptimality} implies that
\begin{align*}
  ||\tilde{x} - x^*||^2 &\leq \frac{2}{\lambda} \epsilon f(x^*) 
  = \frac{\delta^2}{\sum_i v_i([m])} f(x^*) \leq \delta^2,
\end{align*}
where the last inequality follows from the fact that $\sum_i v_i([m]))$ is an upper-bound on the optimal value $f(x^*)$. Therefore, $||x - x^*|| \leq \delta$, as needed.
 % impli Using the fact that $\sum_{i=1}^n v_i[m]$ is an upper bound on $f(x^*)$, and combining the definitions of $\epsilon(\delta)$ and $\lambda$, we deduce  that $\epsilon(\delta) \geq \delta^2/ 2^{\poly(n,m)}$, as needed. If a solution $x$ is such that $f(x) \geq (1-\epsilon) f(x^*)$, then we can combine this with Equation~\eqref{eq:distance_suboptimality} and the definition of  $\epsilon$ to conclude that $||x - x^*|| \leq \delta$. In other words, a $(1-\epsilon)$-approximate solution $\tilde{x}$ is also a $\delta$-estimate of $x^*$. 
%By claim~\ref{claim:solveCAapprox}, we can compute such a $\tilde{x}$ in time $\poly(n,m,\log(1/\epsilon))$. To complete the proof, it suffices to show that $\log(1/\epsilon) = \poly(n,m,\log(1/\delta))$.  
This completes the proof of Lemma~\ref{lem:solveCAconditioned}.

%We assume the rounding scheme is well-conditioned -- i.e. \lambda-concave for lamba > blah --, and show that flipping a coin with probability x^*_j can be done in polynomial time. While, in degenerate cases, our poissson round scheme is not well-conditioned, we argue in next subsection that it can be slightly modified to be well-conditioned.

\subsection{Guaranteeing Good Conditioning}
\label{sec:noise}
\newcommand{\rpp}{\rp^+}
In this section, we propose a modification $\rpp$ of the Poisson rounding scheme $\rp$.  We will argue that $\rpp$ satisfies all the properties of $\rp$ established so far, with one exception: the approximation guarantee of Lemma~\ref{lem:poissonapprox} is reduced to $1-1/e -2^{-2mn}$. Then we will show that $\rpp$ satisfies the curvature assumption of Lemma~\ref{lem:solveCAconditioned}, demonstrating that said assumption may be removed.  Therefore Algorithm~\ref{alg:midr}, instantiated with $r=\rpp$ for combinatorial auctions with MRS valuations in the lottery-value oracle model, is $(1-1/e - 2^{-2mn})$ approximate and can be  implemented in expected $\poly(n,m)$ time. Finally, we show in Remark~\ref{rem:recoverapprox}  how to recover the $2^{-2mn}$ term to get a clean $1-1/e$ approximation ratio, as claimed in Theorem~\ref{thm:CAmain}.

Let 
%$\mu= \frac{nm \lambda e}{\sum_i v_i([m])}$, 
$\mu= 2^{-2mn}$. %We note that $\mu$ is independent of $x$ and $\set{v_i}_i$, and that  $\mu = o(1)$. % Moreover, plugging in the definition $\lambda$ from the previous section, as well as the fact that $\sum_i v_i([m])$ is an upper-bound on the optimal welfare $f(x^*)$, shows that $\mu = \frac{nm 2e}{2^{\poly(n,m)}} = o(1)$. 
 We define $\rpp$ in Algorithm~\ref{alg:round2}. Intuitively, $\rpp$ at first makes a tentative allocation using $\rp$. Then, it cancels said allocation with  small probability $\mu$. Finally, with probability $\beta$ it chooses a random ``lucky winner'' $i^*$ and gives him all the items. $\beta$ is defined as the fraction of items allocated in the original tentative allocation. The motivation behind this seemingly bizarre definition of $\rpp$ is purely technical: as we will see, it can be thought of as adding  ``concave noise'' to $\rp$.

\begin{algorithm}
\caption{Modified Poisson Rounding Scheme $\rpp$}
\label{alg:round2}
\begin{algorithmic}[1]
\INPUT Fractional allocation $x$ with $\sum_i x_{ij} \leq 1$ for all $j$, and $0 \leq x_{ij} \leq 1$ for all $i,j$.
\OUTPUT Feasible allocation $(S_1,\ldots,S_n)$.
\STATE Let $(S_1,\ldots,S_n) \sim \rp(x)$.
\STATE Let $\beta=\frac{\sum_i |S_i|}{m}$.
\STATE Draw $q_1 \in [0,1]$ uniformly at random.
\IF{$q_1 \in [0,\mu]$}
\STATE Let $(S_1,\ldots,S_n) = (\emptyset,\emptyset,\ldots,\emptyset)$.
\STATE Draw $q_2 \in [0,1]$ uniformly at random.
\IF{$q_2 \in [0,\beta]$}
\STATE Choose a player $i^*$ uniformly at random.
\STATE Let $S_{i^*}=[m]$, and $S_i=\emptyset$ for all $i \neq i^*$.
\ENDIF
\ENDIF
 \end{algorithmic}
\end{algorithm}

We can write the expected welfare $\ex[w(\rpp(x))]$ as follows. We use linearity of expectations and the fact that $\beta$ is independent of the choice of $i^*$ to simplify the expression.
\begin{align*}
\ex[w(\rpp(x))] & = \ex[ (1-\mu) w(\rp(x)) + \mu \beta v_{i*}([m]) ] \\
&= (1-\mu) \ex[w(\rp(x))] + \mu \ex[\beta] E[ v_{i*}([m])] \\
&= (1-\mu) \ex[w(\rp(x))] + \mu \ex[\beta] \frac{\sum_i v_i([m])}{n}
\end{align*}

Observe that $\rp$ allocates an item $j$ with probability $\sum_i (1-e^{-{x_{ij}}})$. Therefore, the expectation of $\beta$ is $\frac{\sum_{ij} (1-e^{-x_{ij}})}{m}$. This gives:
\begin{align}\label{eq:rpp_expectation}
 \ex[w(\rpp(x))] = &(1-\mu) \ex[w(\rp(x))] + \frac{\mu}{mn}  \sum_i v_i([m]) \sum_{i,j} (1-e^{-x_{ij}}).
\end{align}

It is clear that the expected welfare when using $r=\rpp$ is within $1-\mu = 1-2^{-2mn}$ of the expected welfare when using $r=\rp$ in the instantiation of Algorithm~\ref{alg:midr}. Using Lemma~\ref{lem:poissonapprox}, we conclude that $\rpp$ is a $(1-1/e - 2^{-2mn})$-approximate rounding scheme. Moreover, using Lemma~\ref{lem:poissonconvex}, as well as the fact that $(1-e^{-x_{ij}})$ is a concave function, we conclude that $\rpp$ is a convex rounding scheme. Therefore, this establishes the analogues of Lemmas~\ref{lem:poissonapprox} and\ref{lem:poissonconvex} for $\rpp$. It is elementary  to verify that our proof of Lemma~\ref{lem:solveCAconditioned} can be adapted to $\rpp$ as well.

It remains to show that $\rpp$ is ``sufficiently concave''. This would establish that the conditioning assumption of Section~\ref{sec:wellconditioned} is unnecessary for $\rpp$. We will show that expression~\eqref{eq:rpp_expectation} is a concave function with curvature of magnitude at least $\lambda=\frac{\sum_{i=1}^n v_i([m])}{ emn 2^{2mn}}$ everywhere. Since the curvature of concave functions is always non-positive, and moreover the curvature of the sum of two functions is the sum of their curvatures, it suffices to show that the second term of the sum~\eqref{eq:rpp_expectation} has curvature of magnitude at least $\lambda$. We note that the curvature of $\sum_{ij} (1-e^{-{x_{ij}}})$ is at least $e^{-1}$  over $x\in [0,1]^{n \times m}$. Therefore, the curvature of the second term of \eqref{eq:rpp_expectation} is at least  \[\frac{\mu}{mn} \left(\sum_i {v_i([m])}\right) e^{-1} = \lambda\] as needed.

\begin{remark}\label{rem:recoverapprox}
  In this section, we sacrificed $2^{-2mn}$ in the approximation ratio in order to guarantee expected polynomial runtime of our algorithm even when convex program~\eqref{lp:CArelaxation} is not well-conditioned. This loss can be recovered to get a clean $1-1/e$ approximation as follows. Given our $(1-1/e-2^{-2mn})$-approximate MIDR algorithm $\A$, construct the following algorithm $\A'$: Given an instance of combinatorial auctions, $\A'$ runs $\A$ on the instance with probability $1- e2^{-2mn}$, and with the remaining probability solves the instance optimally in exponential time $O(2^{2mn})$. It was shown in \cite{DR10} that a random composition of MIDR mechanisms is MIDR, therefore $\A'$ is MIDR. The expected runtime of $A'$ is bounded by the expected runtime of $\A$ plus $e 2^{-2mn} \cdot O(2^{2mn}) =O(1)$. Finally, the expected approximation  of $A'$ is the weighted average of the approximation ratio of $\A$ and the optimal approximation ratio $1$, and is at least $(1-e2^{-2mn})(1-1/e - 2^{-2mn}) + e2^{-2mn} \geq 1-1/e$.
\end{remark}

%Here is modification: Run the poisson rounding. Let $\beta$ be the fraction of items allocated. With small probability $\mu$, cancel all allocations. then with probability beta pick a random player $i*$ and give him everything. The expected welfare is now $E[(1-mu) poissonwelfare + \mu \beta v_{i*}([m]) ] = (1-\mu) E[poissonwelfare] + \mu E[\beta] E[ v_{i*}([m]) ]$ by independence of $\beta$ and $i*$, which is equal to $(1-\mu) E[poissonwelfare] + \mu E[\beta] \gamma$  where $\gamma=\sum_i v_i([m])$ is a constant independent of x with $\gamma >= OPT/n $. Observe that $\ex[\beta] = \sum_{ij} 1-e^{-x_{ij}}$, which is $e^{-1}$ concave since its Hessian is a diagonal matrix with entries between $-1$ and $-1/e$

\section{Additional Preliminaries}

\subsection{Matroid Theory}
\label{app:matroids}
In this section, we review some basics of matroid theory. For a more comprehensive reference, we refer the reader to \cite{oxley}.   

A {\em matroid} $M$ is a pair $(\X,\I)$, where $\X$ is a finite \emph{ground set}, and $\I$ is a non-empty family of subsets of $\X$  satisfying the following two properties.  (1) {\em Downward closure:} If~$S$ belongs
to~$\I$, then so do all subsets of~$S$. (2)
{\em The Exchange Property:} Whenever $T,S \in \I$ with $|T| < |S|$,
there is some $x \in S \sm T$ such that $T \union \set{x} \in \I$. Elements of $\I$ are often referred to as  the \emph{independent sets} of the matroid. Subsets of $\X$ that are not in $\I$ are often called \emph{dependent}.

We associate with matroid $M$ a set function $rank_M:2^\X \to \NN$, known as the \emph{rank function of $M$}, defined as follows: $rank_M(A)= \max_{S \in \I} |S \intersect A|$. 
Equivalently, the rank of set $A$ in matroid $M$ is the maximum size of an independent set contained in $A$.  
A set function $f$ on a ground set $\X$ is a \emph{matroid rank function} if there exists a matroid $M$ on the same ground set such that $f=rank_M$. 
Matroid rank functions are monotone ($f(S) \leq f(T)$ when $S \sse T$), normalized ($f(\emptyset)=0$), and submodular ($f(S) + f(T) \geq f(S \intersect T) + f(S \union T)$ for all $S$ and $T$).

For a matroid $M=(\X,\I)$ and $S \sse \X$, we define the \emph{contraction} of $M$ by $S$, denoted by $M/S$. $M/S$ is a pair $(\X\sm S,\I')$, where $\I'$ is the following family of subsets of $\X \sm S$: A set $T \sse \X\sm S$ is in $\I'$   if and only if $rank_M(S\union T) - rank_M(S) = |T|$. For each matroid $M=(\X,\I)$ and $S \sse \X$, the contraction $M/S$ is also a matroid.

\subsection{Convex Optimization}
\label{sec:convexoptimization}

In this section, we distill some basics of convex optimization. For more details, see  \cite{nemirovski}.

\begin{definition}
  A maximization problem is given by a set $\Pi$ of instances $(\P,c)$, where $\P$ is a subset of some euclidean space, $c: \P \to \RR$, and the goal is to maximize $c(x)$ over $x \in \P$. We say $\Pi$  is a convex maximization problem if for every $(\P,c) \in \Pi$, $\P$ is a compact convex set, and $c: \P \to \RR$ is concave. If $c: \P \to \RR^+$ for every instance of $\Pi$, we say $\Pi$ is non-negative.
\end{definition}

\begin{definition}\label{def:rsolvable}
  We say a non-negative maximization problem $\Pi$ is \emph{$R$-solvable} in polynomial time if there is an algorithm that takes as input the representation of an instance $\I=(\P,c) \in \Pi$  --- where we use $|\I|$ to denote the number of bits in the representation --- and an approximation parameter $\epsilon$, and in time $\poly(|\I|,\log(1/\epsilon))$ outputs $x \in \P$ such that $c(x) \geq (1-\epsilon) \max_{y \in \P} c(y)$.
\end{definition}

\begin{fact}\label{fact:convex_solvability}
  Consider a non-negative convex maximization problem $\Pi$. If the following are satisfied, then $\Pi$ is $R$-solvable in polynomial time using the ellipsoid method. We let $\I=(\P,c)$ denote an instance of $\Pi$, and let $m$ denote the dimension of the ambient euclidean space.
  \begin{enumerate}
  \item Polynomial Dimension: $m$ is polynomial in $|\I|$.
  \item Starting ellipsoid: There is an algorithm that computes, in time $\poly(|\I|)$, a point $c \in \RR^m$, a matrix $A \in \RR^{m \by m}$, and a number $\V \in \RR$ such that the following hold. We use $E(c,A)$ to denote the ellipsoid given by center $c$ and linear transformation $A$.
    \begin{enumerate}
    \item  $E(c,A) \supseteq  \P$
    \item $\V \leq volume(\P)$
    \item $\frac{volume(E(c,A))}{\V} \leq 2^{\poly(|\I|)}$
    \end{enumerate}
  \item Separation oracle for $\P$: There is an algorithm that takes takes input $\I$ and $x \in \RR^m$, and in time $\poly(|\I|,|x|)$ where $|x|$ denotes the size of the representation of $x$, outputs ``yes'' if $ x\in \P$, otherwise outputs $h \in \RR^m$ such that $h^T x < h^T y$ for every $y \in \P$.
  \item First order oracle for $c$: There is an algorithm that takes input $\I$ and $x \in \RR^m$, and in time $\poly(|\I|,|x|)$ outputs $c(x) \in \RR$ and $\grad c ( x) \in \RR^m$. %\todo{what if these things irrational??} 

  \end{enumerate}

\end{fact}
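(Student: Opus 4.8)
The plan is to prove this along the classical lines for convex programming via the ellipsoid method (see \cite{nemirovski}), reducing approximate maximization to a binary search over a family of approximate feasibility tests. Fix an instance $\I=(\P,c)$ and set $v^\star:=\max_{y\in\P}c(y)$, which is attained since $\P$ is compact and $c$ is continuous; the goal is to output $x\in\P$ with $c(x)\ge(1-\epsilon)v^\star$ in time $\poly(|\I|,\log(1/\epsilon))$. For a threshold $t\ge 0$ let $\P_t:=\{y\in\P:c(y)\ge t\}$; this set is convex because $c$ is concave, and $\P_0=\P$ since $c$ is non-negative. The algorithm binary-searches $t$ over $[0,V_{\max}]$ at precision $\gamma=2^{-\poly(|\I|,\log(1/\epsilon))}$, where $V_{\max}$ is an a~priori bound on $v^\star$ obtained from a single first-order query: if $x_0$ is the center of the starting ellipsoid $E_0$ of condition~2, then by concavity $v^\star\le c(x_0)+\|\grad c(x_0)\|\cdot\mathrm{diam}(E_0)=2^{\poly(|\I|)}$.

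The heart of the reduction is that conditions~3 and~4 yield a separation oracle for each $\P_t$. Given a query point $x$, first run the separation oracle for $\P$; if it certifies $x\notin\P$ its separating hyperplane is also valid for $\P_t\subseteq\P$. Otherwise $x\in\P$ and we query the first-order oracle for $c(x)$ and $\grad c(x)$: if $c(x)\ge t$ we report $x$ feasible, while if $c(x)<t$ then concavity gives $\grad c(x)^\top(y-x)\ge c(y)-c(x)>0$ for every $y\in\P_t$, so $h:=\grad c(x)$ is a valid separating direction (the degenerate case $\grad c(x)=0$ means $x$ globally maximizes $c$, so $v^\star=c(x)<t$ and $\P_t=\emptyset$). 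Running the central-cut ellipsoid method with this oracle, started from $E_0\supseteq\P\supseteq\P_t$ with volume lower bound $\V$ and ambient dimension $m=\poly(|\I|)$ by condition~1, produces in $\poly(|\I|,\log(1/\eta))$ iterations either a point of $\P_t$ or a certificate that $\mathrm{vol}(\P_t)<\eta$.

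The hard part will be controlling these volumes -- indeed $\P_{v^\star}$ can be a single point, which is precisely why an exact optimizer is out of reach and the formulation only asks for a $(1-\epsilon)$-approximation. Here non-negativity of $c$ is used in an essential way: for any $y\in\P$, the point $(1-\epsilon)y^\star+\epsilon y$ lies in $\P$ and has value at least $(1-\epsilon)v^\star+\epsilon\, c(y)\ge(1-\epsilon)v^\star$ because $c(y)\ge 0$, so $(1-\epsilon)y^\star+\epsilon\P\subseteq\P_{(1-\epsilon)v^\star}$ and hence $\mathrm{vol}(\P_{(1-\epsilon)v^\star})\ge\epsilon^m\,\mathrm{vol}(\P)\ge\epsilon^m\V$. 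Choosing $\eta$ just below $\epsilon^m\V$ (so that $\log(1/\eta)$ stays polynomial in $|\I|$ and $\log(1/\epsilon)$, using condition~2), the feasibility test above cannot return the ``small volume'' verdict for any queried threshold $t\le(1-\epsilon)v^\star$; it must instead hand back an actual point of $\P_t$. Thus the binary search terminates with some $x\in\P$ satisfying $c(x)\ge(1-\epsilon)v^\star-\gamma$, which is $\ge(1-2\epsilon)v^\star$ once $\gamma\le\epsilon v^\star$; after rescaling $\epsilon$ and noting that each feasibility test and the number of search steps are $\poly(|\I|,\log(1/\epsilon))$, this gives the claim. The only remaining loose ends are standard: the inequality $\gamma\le\epsilon v^\star$ holds unless $v^\star=0$ (in which case $c\equiv 0$ on $\P$ and the $t=0$ test already returns a valid point), since polynomial-time oracles output polynomially many bits and hence a nonzero $v^\star$ is at least $2^{-\poly(|\I|)}$; and the rational-arithmetic bookkeeping of the ellipsoid method is as in \cite{nemirovski}.
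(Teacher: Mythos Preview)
The paper does not give its own proof of this statement: it is presented as a background ``Fact'' with a pointer to \cite{nemirovski}, so there is no in-paper argument to compare against. Your outline is the standard ellipsoid-method proof and is essentially correct; in particular, the two substantive ideas --- building a separation oracle for the super-level set $\P_t$ from the separation oracle for $\P$ together with the subgradient inequality, and the volume lower bound $\mathrm{vol}\bigl(\P_{(1-\epsilon)v^\star}\bigr)\ge\epsilon^m\,\mathrm{vol}(\P)\ge\epsilon^m\V$ (which is precisely where non-negativity of $c$ is used) --- are exactly the right ones.

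Two minor points you may want to tighten. First, your bound $V_{\max}$ evaluates $c$ and $\grad c$ at the center $x_0$ of the starting ellipsoid, but $x_0$ need not lie in $\P$, and $c$ is only assumed concave on $\P$; the usual fix is to first run the $t=0$ feasibility test (which must succeed since $\mathrm{vol}(\P)\ge\V$) to obtain some $x_0\in\P$, and bound $v^\star$ from there. Second, the assertion that a nonzero $v^\star$ is at least $2^{-\poly(|\I|)}$ does not follow merely from the oracles having polynomial-bit outputs, since $v^\star$ need not itself be an oracle value; this is a genuine modeling assumption that is typically either stated explicitly or absorbed into the instance encoding, and the Fact as written leaves it implicit.
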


\section{Additional Technical Details and Commentary}

\subsection{Computing Payments}
\label{sec:payments}

In this section, we show how to efficiently compute truth-telling payments for our mechanism. In fact, as shown below, this is possible for any maximal in distributional range allocation rule for combinatorial auctions given as a black box.

\begin{lemma}
  \label{lem:compute_payments}
Let $\A$ be an MIDR allocation rule for combinatorial auctions, and let $v_1, \ldots, v_n$ be input valuations. Assume  black-box access to $\A$, and value oracle access to $\set{v_i}_{i=1}^n$.  We can compute, with $\poly(n)$ over-head in runtime, payments $p_1,\ldots,p_n$ such that $\ex[p_i]$ equals the VCG payment of player $i$ for MIDR allocation rule $\A$ on input $v_1,\ldots,v_n$.
\end{lemma}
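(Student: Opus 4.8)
The plan is to recall the exact form of the VCG payment associated with an MIDR allocation rule and then observe that each quantity appearing in it can be estimated (in fact, computed exactly in expectation) using only black-box calls to $\A$ and value-oracle calls to the $v_i$. Recall that for an MIDR rule $\A$ with distributional range $\D$, the ``fractional''/distributional VCG payment charged to player $i$ on input $v = (v_1,\ldots,v_n)$ is
\[
 p_i^{VCG}(v) \;=\; h_i(v_{-i}) \;-\; \Ex_{x \sim \A(v)}\!\Big[\textstyle\sum_{k \neq i} v_k(x)\Big],
\]
where the pivot term $h_i(v_{-i})$ does not depend on $v_i$. The standard Clarke pivot choice is $h_i(v_{-i}) = \Ex_{x \sim \A(0_i, v_{-i})}\big[\sum_{k\neq i} v_k(x)\big]$, i.e. run $\A$ on the instance where player $i$'s valuation is identically zero. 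So in total $p_i^{VCG}(v) = \Ex_{x \sim \A(0_i,v_{-i})}[\sum_{k\neq i} v_k(x)] - \Ex_{x\sim \A(v)}[\sum_{k\neq i} v_k(x)]$, which is truth-inducing for an MIDR rule by the usual argument (already cited to \cite{DD09}).

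\textbf{Turning expectations into a single sample.}
The subtlety the lemma is really addressing is that we cannot evaluate these expectations exactly — $\A$ is randomized and given only as a black box, so we cannot integrate over its internal coins. The fix is the standard trick: define the \emph{random} payment $p_i$ to be an unbiased single-sample estimator of $p_i^{VCG}(v)$. Concretely, draw one sample $x \sim \A(0_i, v_{-i})$ and one independent sample $x' \sim \A(v)$ — note $x'$ is exactly the allocation the mechanism already outputs, so it costs nothing extra — and set
\[
 p_i \;=\; \sum_{k \neq i} v_k(x) \;-\; \sum_{k \neq i} v_k(x').
\]
Each of the $2(n-1)$ terms is one value-oracle query, so this is $O(n)$ oracle calls per player and $O(n^2)$ overall, i.e. $\poly(n)$ overhead. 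By linearity of expectation and independence of the two samples, $\Ex[p_i] = \Ex_{x\sim \A(0_i,v_{-i})}[\sum_{k\neq i}v_k(x)] - \Ex_{x'\sim\A(v)}[\sum_{k\neq i}v_k(x')] = p_i^{VCG}(v)$, as required. One should also remark that since $p_i$ depends on the bids only through the two allocations and through evaluating $v_k$ at those allocations, and $\A$ is MIDR, the inequality \eqref{eq:truthful} holds: truthfulness only needs the payment to be correct \emph{in expectation}, which is exactly what we have arranged, and the pivot term is manifestly independent of $v_i$.

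\textbf{What I expect to be the main point to get right.}
There is no deep obstacle here; the whole content is (i) writing down the correct MIDR-VCG payment with a valid pivot term, and (ii) replacing expectations by an unbiased estimator and checking the overhead. The one place to be careful is independence: the sample $x$ used for the pivot term must come from a \emph{fresh} invocation $\A(0_i, v_{-i})$, run separately for each player $i$ (hence the $\poly(n)$, not $O(1)$, overhead), and its coins must be independent of the coins used for the actual allocation $x' \sim \A(v)$, so that the product of expectations argument goes through and no spurious correlation breaks the unbiasedness. I would also note in passing that this argument is entirely generic — it uses nothing about combinatorial auctions beyond the fact that welfare is additive across players and each $v_k$ is accessible by a value oracle — which is why the lemma is stated for an arbitrary black-box MIDR rule.
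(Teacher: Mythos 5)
Your proposal is correct and takes essentially the same route as the paper: write the Clarke-pivot VCG payment as a difference of two expectations, sample once from $\A(\mathbf{0}_i, v_{-i})$ and reuse the mechanism's own sample from $\A(v)$, evaluate via value-oracle queries, and invoke linearity of expectation for unbiasedness. One small over-caution: independence of the two samples is not actually needed, since linearity of expectation gives $\ex[p_i]=\ex[A]-\ex[B]$ regardless of any correlation between $A$ and $B$.
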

\begin{proof}
%  It suffices to compute a (random) payment for each player with expectation equal to his VCG payment. 
Without loss of generality, it suffices to show how to compute $p_1$. Let ${\bf 0} : 2^{[m]} \to \RR$ be the valuation evaluating to $0$ at each bundle. Recall (see e.g. \cite{Nis07}) that the VCG payment of player $1$  is equal to 
\begin{align}\label{eq:vcgpayment}
 \ex_{T \sim \A({\bf 0},v_2,\ldots,v_n)}\left[\sum_{i=2}^n v_i(T_i)\right] -\ex_{S \sim \A(v_1,\ldots,v_n)}\left[\sum_{i=2}^n v_i(S_i)\right]. 
\end{align}

Let $(S_1,\ldots,S_n)$~ be a sample from $\A(v_1,\ldots,v_n)$, and let ~$(T_1,\ldots,T_n)$~ be a sample from $\A({\bf 0},v_2,\ldots,v_n)$. Let $p_1= \sum_{i=2}^n v_i(T_i) - \sum_{i=2}^n v_i(S_i)$. Using linearity of expectations, it is easy to see that the expectation of $p_1$ is equal to the expression in \eqref{eq:vcgpayment}. This completes the proof.
\end{proof}

We note that the mechanism resulting from Lemma~\ref{lem:compute_payments} is individually rational in expectation, and each payment is non-negative in expectation. We leave open the question of whether it is possible to enforce individual rationality and non-negative payments for our mechanism ex-post.%However, we know how to strengthen this Lemma for the MIDR allocation rule of Theorem~\ref{thm:CAmain} to enforce individual rationality and non-negativity \emph{ex-post}, i.e. for every realization of the internal random coins of the mechanism. We defer details of this to the full version of the paper.

\subsection{Beyond Matroid Rank Sum Valuations}
\label{sec:beyondMRS}
In this section, we discuss the prospect of extending our result beyond matroid rank sum valuations. First, we argue that our restriction to a subset of submodular functions is not merely an artifact of our analysis. Specifically, we exhibit a submodular  function that is not in the matroid rank sum family, and moreover the Poisson rounding scheme can be non-convex when a player has this function as their valuation. Then, we briefly argue that our mechanism may yet apply to some valuations that are not matroid rank sums.

We define a budget additive function $v$ on four items $\set{1,2,3,4}$. Three of the items are ``small'', one item is ``big'', and the budget equals the value of the big item. 
\begin{equation*}
v(S)=
\begin{cases} 
1 &  \text{if }  S=\set{j} \text{ for } j \in \set{1,2,3},\\
2 &  \text{if }  S=\set{4},\\
\min\left(\sum_{j \in S} v(\set{j}), 2\right) \ & \text{otherwise}
\end{cases}
\end{equation*}
We can show that $v$ is not a matroid rank sum function by invoking Claim~\ref{claim:discrete-convex}. Specifically, one can manually check that the discrete Hessian matrix $\H^v_\emptyset$ of $v$ at $\emptyset$ (see Definition~\ref{def:discrete-hess}) is not negative semi-definite. Moreover, for a player with valuation $v$, Poisson rounding renders the player's expected value function $G_v(x)$ (Equation~\eqref{eq:G})  non-concave in $x$: By Equation~\eqref{eq:4}, the Hessian matrix of $G_v(x)$ approaches the  discrete Hessian $\H^v_\emptyset$  as $x$ tends to zero. Since $\H^v_\emptyset$ is not negative semi-definite, $G_v(x)$ is non-concave for $x$ near zero. We note that we can construct a large family of similar counter examples by simply increasing the number of ``small items'' in $v$. 

Finally, we observe that our mechanism may apply to some valuations that are not matroid rank sums. We observe that we only used two properties of MRS functions: their discrete Hessian matrices are negative semi-definite (Claim~\ref{claim:discrete-convex}, which is used to prove Lemma~\ref{lem:poissonconvex}), and they are submodular (used to prove Lemma~\ref{lem:poissonapprox}). Therefore, our result extends directly to the class of all set functions satisfying both of these properties. We leave open the question of whether there exist interesting functions in this class that are not matroid rank sums. More generally, understanding the class of set functions with negative semi-definite discrete Hessian matrices --- in particular the relationship of this class to other classes of set functions studied in the literature --- may be an interesting direction for future inquiry.

\end{document}